\newtheorem{theorem}{Theorem}
\newtheorem{lemma}[theorem]{Lemma}
\newtheorem{cor}[theorem]{Corollary}
\newcommand{\bea}{\begin{eqnarray}}
\newcommand{\eea}{\end{eqnarray}}
\newcommand{\be}{\begin{equation}}
\newcommand{\ee}{\end{equation}}
\newcommand{\ba}{\begin{equation}\begin{aligned}}
\newcommand{\ea}{\end{aligned}\end{equation}}
\newtheorem{definition}{Definition}
\theoremstyle{remark}
\newtheorem{remark}{Remark}
\newtheorem{example}{Example}
\def\be{\begin{equation}}
\def\ee{\end{equation}}
\newcommand{\mA}{\mathcal{A}}
\newcommand{\C}{{\mathbb{C}}}
\newcommand{\ket}[1]{|#1\rangle}
\newcommand{\braket}[2]{\langle{#1}|{#2}\rangle}
\def\>{\rangle}
\def\<{\langle}
\def\diag{ \mathrm{diag}}
\begin{document}
	
	\title{Constructions of mutually unbiased  entangled bases}

	\author{Fei Shi}
	\email{shifei@mail.ustc.edu.cn}
	\affiliation{School of Cyber Security,
		University of Science and Technology of China, Hefei, 230026, People's Republic of China.}
	
	\author{Xiande Zhang}
	\email{drzhangx@ustc.edu.cn}
	\affiliation{School of Mathematical Sciences,
		University of Science and Technology of China, Hefei, 230026, People's Republic of China.}
	
	\author{Lin Chen}
	\email{linchen@buaa.edu.cn}
	\affiliation{School of Mathematical Sciences, Beihang University, Beijing 100191, China}
	\affiliation{International Research Institute for Multidisciplinary Science, Beihang University, Beijing 100191, China}
	
	\begin{abstract}
We construct two mutually unbiased bases  by maximally entangled states (MUMEB$s$)  in $\C^{2}\otimes \C^{3}$. This is the first example of  MUMEB$s$ in $\C^{d}\otimes \C^{d'}$ when $d\nmid d'$, namely $d'$ is not divisible by $d$. We show that they cannot be extended to four MUBs in $\C^6$. We propose a recursive construction of mutually unbiased  bases formed by special entangled states with a fixed Schmidt number $k$ (MUSEB$k$s). It shows that
 $\min \{t_{1},t_{2}\}$ MUSEB$k_{1}k_{2}$s in $\C^{pd}\otimes \C^{qd'}$ can be constructed from $t_{1}$ MUSEB$k_{1}$s in $\C^{d}\otimes \C^{d'}$ and $t_{2}$ MUSEB$k_{2}$s in $\C^{p}\otimes \C^{q}$ for any $d,d',p,q$. Further, we show that  three MUMEB$s$ exist in $\C^{d}\otimes \C^{d'}$ for any $d,d'$ with $d\mid d'$, and two  MUMEB$s$ exist in $\C^{d}\otimes \C^{d'}$ for infinitely many $d,d'$ with $d\nmid d'$.

	\end{abstract}
	
	
	\maketitle
\vspace{-0.5cm}
~~~~~~~~~~\indent{\it Keywords}: mutually unbiased bases, maximally entangled states, special entangled states
\section{Introduction}\label{intro}
The notion of mutually unbiased bases (MUBs) became  an essential feature of  quantum
mechanics in 1960 in the works of  Schwinger \cite{Schwinger}. In his work, Schwinger  realized that no
information can be retrieved when a quantum state which is prepared in a basis
state is measured with respect to the basis mutually unbiased with the prepared one.
This observation has a striking application  in the well-known BB84 quantum key distribution (QKD) protocol \cite{Bennett}.

Estimating an unknown physical transformation of a state, which is described by a unitary operator, was studied in \cite{Acin}. It is a more challenging problem compared to state estimation because it requires optimal measurements of states both post  and prior to transformation. Such estimation has an immediate application in the alignment of reference frames using quantum spins \cite{Chiribella2004}, and in the field of quantum cryptography but with a lower security threshold \cite{Lucamarini,Chiribella2008}. To improve the security, a notion of ``mutually unbiased unitary-operator bases" (MUUBs) was put forward in \cite{Scott} for unitary operators acting on a $d$-dimensional Hilbert space as  a set of pairwise \emph{mutually unbiased} unitary operator bases. The concept MUUBs was shown to be equivalent to MUBs formed by maximally entangled states (MUMEBs) \cite{Shaari}. In the same paper, Shaari et al. \cite{Shaari} generalized the original MUUBs for unitary operators acting on \emph{subspaces} of a Hilbert space, and considered the distinguishability of unitaries selected from a set of MUUBs and  its use in a QKD setup.

In this paper, we consider another generalization of MUUBs. Motivated by the equivalence between MUUBs in $\C^{d}$ and MUBs formed by maximally entangled states (MUMEBs)  in $\C^{d}\otimes \C^{d}$, we study  mutually unbiased bases formed by special entangled states with a fixed Schmidt number $k$ (MUSEB$k$s) in a general bipartite state space $\C^{d}\otimes \C^{d'}$ $(1\leq k\leq d,d')$,
Let $M_{k}(d,d')$ denote the maximum size of MUSEB$k$s  in $\C^{d}\otimes \C^{d'}$, i.e., the maximum number of mutually unbiased bases formed by special entangled states with Schmidt number $k$. The central problem here is to determine  $M_{k}(d,d')$ in a given bipartite state space and provide bounds for this value.

There have been results in \cite{Tao,Zhang,Liu,Xu,Xu1,Cheng,Han,Scott,Shaari} in which MUMEBs were known only when $d\mid d'$, namely $d'$ is divisible by $d$. We give the first example of MUMEBs when $d\nmid d'$, and  propose a simple construction of MUSEB$k$s for general systems  via matrix spaces. Consequently, we improve all previous results on MUSEB$k$s (MUMEBs).  See Table~\ref{all}  for a summary and comparison with known results.

 The rest of this paper is organized as follows. In Sec.~\ref{Def},  we introduce some related notations and terminologies, and give a one-to-one correspondence between MUSEB$k$s in $\C^{d}\otimes \C^{d'}$ and MUSEB$k$s in $\mathcal{M}_{d\times d'}$. Sec.~\ref{sec:first} provides a construction of two MUMEBs in $\C^{2}\otimes \C^{3}$, which is the first MUMEBs when $d\nmid d'$. In Sec.~\ref{sec:appli}, we show that the two MUMEBs in $\C^{2}\otimes \C^{3}$ constructed in Sec.~\ref{sec:first}  cannot be extended to 4 MUBs in $\C^6$.  In Sec.~\ref{Mks}, we establish a construction of
 $\min \{t_{1},t_{2}\}$ MUSEB$k_{1}k_{2}$s in $\C^{pd}\otimes \C^{qd'}$ from  $t_{1}$ MUSEB$k_{1}$s in $\C^{d}\otimes \C^{d'}$ and $t_{2}$ MUSEB$k_{2}$s in $\C^{p}\otimes \C^{q}$, which yields several new lower bounds in Sec.~\ref{Mbs}. We give the main conclusion and some open problems in Sec.~\ref{conpen}.

\begin{widetext}
	\begin{table}[htbp]
		\caption{Results about MUSEB$k$s in $\C^{d}\otimes \C^{d'}$}($d,q$ have a factorization of prime power, $p_1^{a_1}\leq\cdots\leq p_s^{a_s}$,$(p_1')^{a_1'}\leq\cdots\leq (p_s')^{a_s'}$)\label{all}
		\centering
		\renewcommand\tabcolsep{20.0pt}
		\begin{tabular}{ccc}
			\hline\noalign{\smallskip}
			Conditions                                         &Bounds                             & References\\
			\noalign{\smallskip}\hline\noalign{\smallskip}
			$d\geq2$                              &$M_d(d,d)\leq d^2-1$                                 &\cite{Scott}\\
			$d=2,3,5,7,11$                                   &$M_d(d,d)=d^2-1$                           &\cite{Scott} \\
			$d=p_{1}^{a_{1}}\cdots p_{s}^{a_{s}}$, $p_1^{a_1}\geq 3$     &$M_d(d,d)\geq p_1^{a_1}-1$       &\cite{Liu}\\
			$d\geq 3$, prime power &$M_d(d,d)\geq 2(d-1)$           &\cite{Xu}\\
			$d=p_{1}^{a_{1}}\cdots p_{s}^{a_{s}}$,  odd   &$ M_d(d,d)\geq 2(p_1^{a_1}-1)$             &\cite{Cheng}\\
			$d$, odd prime power                      &$M_d(d,d)\geq (d^2-1)/2$                          &\cite{Xu1}\\
			$d=2^{t}p_{1}^{a_{1}}\cdots p_{s}^{a_{s}}$, $t=0$        &$M_d(d,d)\geq(p_1^{2a_1}-1)/2$          &This paper\\
			$d=2^{t}p_{1}^{a_{1}}\cdots p_{s}^{a_{s}}$, $t=1$       &$M_d(d,d)\geq3$                                  &This paper\\
			$d=2^{t}p_{1}^{a_{1}}\cdots p_{s}^{a_{s}}$, $t\geq 2$ &$M_d(d,d)\geq \min\{2(2^t-1),(p_{1}^{2a_{1}}-1)/2\}$ &This paper\\\hline
			$M_d(d,d')\geq 2, \ell\geq 1$                  &$M_d(d,2^ld')\geq 2$                 &\cite{Zhang}\\
			$q$ and $ d$, both prime powers    &$M_{d}(d,qd)\geq \min\{q,M_{d}(d,d)\}$                    &\cite{Xu}\\
			$q=(p_{1}')^{a_{1}'}\cdots (p_{s}')^{a_{t}'}$, $d$ odd  &$M_{d}(d,qd)\geq         \min\{(p_{1}')^{a_{1}'}+1,M_{d}(d,d)\}$                                                         &\cite{Cheng} \\
			$d, q\geq 2$                  &$M_{d}(d,qd)\geq \min\{N(q),M_{d}(d,d)\}\geq 3$                 &This paper\\
			$d,q\geq 1$	                       &$M_{2d}(2d,3qd)\geq2$.                                        &This paper\\\hline
			$k=2^l$                              &$M_2(3,4k)\geq 2$                                              &\cite{Han}\\
			$k\geq 1$                             &$M_2(3,2k)\geq 2$                                             &This paper \\
			$d, p\geq 2$, $q\geq 1$                &$M_{d}(pd,qd)\geq \min\{N(p),M_{d}(d,qd)\}\geq 3$      &This paper   \\
			\noalign{\smallskip}\hline
		\end{tabular}\label{Res}
	\end{table}
\end{widetext}


\section{Definition and preliminary}\label{Def}


Assume that $1\leq k \leq \min\{d,d'\}$. A (pure) state $\ket{\psi}\in \C^{d}\otimes \C^{d'}$  is  called
a \emph{special entangled state with Schmidt number $k$} if it has  Schmidt decomposition:
\begin{equation*}
\psi=\frac{1}{\sqrt{k}}\sum_{i=0}^{k-1}\ket{i}\ket{i'},
\end{equation*}
where $\{\ket{i}\}$ and $\{\ket{i'}\}$ are orthonormal sets of $\C^{d}$ and $\C^{d'}$, respectively \cite{Shi}.
In particular, a special entangled state with Schmidt number $k$ is a product state when $k=1$, and a maximally entangled state when $k=\min\{d,d'\}$. A basis $\mathcal{B}$ is called a special entangled basis with  Schmidt number $k$ (SEB$k$) in $\C^{d}\otimes \C^{d'}$ if it contains $dd'$ pairwise orthogonal special entangled states with Schmidt number $k$ \cite{Shi,Guo15}. In particular, $\mathcal{B}$ is a maximally entangled state basis (MEB) when $k=\min\{d,d'\}$.

Two orthonormal bases $\mathcal{B}_{1}=\{\ket{\phi_{i}}\}_{i=1}^{dd'}$ and $\mathcal{B}_{2}=\{\ket{\psi_{i}}\}_{i=1}^{dd'}$  of $\C^{d}\otimes \C^{d'}$ are said to be \emph{mutually unbiased} if
\begin{equation*}
|\braket{\phi_{i}}{\psi_{j}}|=\frac{1}{\sqrt{dd'}}, \ (1\leq i, j\leq dd').
\end{equation*}
A set of SEB$k$s $\mathcal{A}=\{\mathcal{B}_{1}, \mathcal{B}_{2}, \ldots, \mathcal{B}_{m}\}$ in $\C^{d}\otimes \C^{d'}$ is called mutually unbaised (MUSEB$k$s) if every pair  $\mathcal{B}_{i}$ and $\mathcal{B}_{j}$ are mutually unbiased.  MUSEB$k$s are  MUUBs when $k=d=d'$, and MUMEBs when $k=\min\{d,d'\}$.
 When $k=d'=1$, MUSEB$k$s are just MUBs \cite{Andreas}, for which the maximum size is denoted by $N(d)$ conventionally.  It is shown that $N(d)\leq d+1$ and $N(d)=d+1$ when $d$ is a prime power \cite{Andreas}.  An open problem is to determine $N(d)$ when $d$ is not a prime power, which has an elementary lower bound ($d=p_1^{a_1}\dots p_s^{a_s}$, $p_1^{a_1}\leq\cdots\leq p_s^{a_s}$):
\begin{equation}\label{nd}
N(d)\geq \min\{N(p_{1}^{a_{1}}),\ldots ,N(p_{s}^{a_{s}})\}=p_{1}^{a_{1}}+1.
\end{equation}
Actually, this bound can be improved in some special cases \cite{Pawe}. It is shown that a complete set of MUBs of a bipartite system contains a fixed amount of entanglement \cite{wpz11}.
%

Now we define similar concepts in matrix spaces. Let $\mathcal{M}_{d\times d'}$ be the Hilbert  space of all $d\times d'$ complex matrices equipped with inner product
\begin{equation*}\label{inn}
(A,B)=\text{Tr}(A^{\dag}B).
\end{equation*}
There is a one-to-one relation between $\C^{d}\otimes \C^{d'}$ and $ \mathcal{M}_{d\times d'}$ \cite{Guo15}:
\begin{equation*}
\begin{aligned}
|\psi_{i}\rangle=&\sum_{k,l}a_{k,l}^{(i)}|k\rangle|l'\rangle\in \C^{d}\otimes \C^{d'} \Longleftrightarrow A_{i}=[a_{k,l}^{(i)}]\in \mathcal{M}_{d\times d'},\\
&Sr(|\psi_{i}\rangle)=\text{rank}(A_{i}), \ \langle\psi_{i}|\psi_{j}\rangle=\text{Tr}(A_{i}^{\dag}A_{j}),
\end{aligned}
\end{equation*}
where $\{k\}$ and $\{l'\}$ are the computational bases of $\C^{d}$ and $\C^{d'}$, respectively, and $Sr(|\psi_{i}\rangle)$ denotes the Schmidt number of $|\psi_{i}\rangle$.

 Let $\{\frac{1}{\sqrt{k}},\frac{1}{\sqrt{k}},\ldots,\frac{1}{\sqrt{k}}\}_{k}$ denote $k$ values of $\frac{1}{\sqrt{k}}$.
A $d\times d'$ matrix is called a \emph{$k$-singular-value-$\frac{1}{\sqrt{k}}$ matrix} if its nonzero singular values are $\{\frac{1}{\sqrt{k}},\frac{1}{\sqrt{k}},\ldots,\frac{1}{\sqrt{k}}\}_{k}$. Then $|\psi_{i}\rangle$ is a special entangled state with Schmidt number $k$ if and only if $A_{i}$ is a $k$-singular-value-$\frac{1}{\sqrt{k}}$ matrix. Especially, when $k=d=d'$, $|\psi_{i}\rangle$ is a maximally entangled state if and only if $\sqrt{d}A_{i}$ is a unitary matrix. We give the definitions of SEB$k$s and MUSEB$k$s in $\mathcal{M}_{d\times d'}$.

	\begin{definition}
	A set of $k$-singular-value-$\frac{1}{\sqrt{k}}$ matrices $\{A_{i}\}_{i=1}^{dd'}$ of $\mathcal{M}_{d\times d'}$ is called an SEB$k$ if $\text{Tr}(A_{i}^{\dag}A_{j})=\delta_{ij}$.
    \end{definition}
	
	\begin{definition}\label{defmu}
	Two SEB$k$s $\mathcal{B}_{1}=\{A_{i}\}_{i=1}^{dd'}$ and $\mathcal{B}_{2}=\{B_{i}\}_{i=1}^{dd'}$  in $\mathcal{M}_{d\times d'}$ are said to be \emph{mutually unbiased} if
	\begin{equation*}
	|\text{Tr}(A_{i}^{\dag}B_{j})|=\frac{1}{\sqrt{dd'}}, \ (1\leq i, j\leq dd').
	\end{equation*}
	A set of SEB$k$s $\mathcal{A}=\{\mathcal{B}_{1}, \mathcal{B}_{2}, \ldots, \mathcal{B}_{m}\}$ in $\mathcal{M}_{d\times d'}$ is called  MUSEB$k$s if every pair $\mathcal{B}_{i}$ and $\mathcal{B}_{j}$ are mutually unbiased.
    \end{definition}

Due to the one-to-one relation, an SEB$k$ of $\mathcal{M}_{d\times d'}$ is an SEB$k$ of  $\C^{d}\otimes \C^{d'}$;
MUSEB$k$s in $\mathcal{M}_{d\times d'}$ are MUSEB$k$s in $\C^{d}\otimes \C^{d'}$. When $k=d=d'$, Definition~\ref{defmu} gives  MUMEBs in $\mathcal{M}_{d\times d}$, which corresponds to  MUMEBs in $\C^{d}\otimes \C^{d}$. In this case, multiplying each matrix by a scaler $\sqrt{d}$ gives a set of mutually unbiased unitary bases (MUUBs) in $\C^{d}$ defined in \cite{Scott}. So Definition~\ref{defmu} can be viewed as a generation of MUUBs in  \cite{Scott}, from which we know $M_{d}(d,d)\leq d^{2}-1$ and $M_{d}(d,d)=d^{2}-1$ when $d=2,3,5,7,11$.	

\begin{remark}
  In \cite{Shaari}, the authors gave another definition of MUUBs:   Consider two distinct orthogonal bases, $\mA_0$ and $\mA_1$,
 composed of unitary transformations for some subspace
of the vector space $\C^{d}$. $\mA_0$ and $\mA_1$ are MUUBs
provided that \[|\text{Tr}(A_0^{(i)\dag}A_1^{(j)})|^2=c, ~\forall~A_0^{(i)}\in \mA_0, ~A_1^{(j)}\in \mA_1,\]
for $i,j = 1, . . . ,n$ and some constant $c\neq 0$.

When $\mA_0$ and $\mA_1$ are composed of unitary transformations for  the whole vector space $\C^{d}$, and $c=1$, then this definition reduces to Definition~\ref{defmu} for the  case $k=d=d'$ by  multiplying each matrix by a scaler $1/\sqrt{d}$.
\end{remark}

 \section{The first MUMEBs with $d\nmid d'$}\label{sec:first}
	
In previous works of \cite{Tao,Zhang,Liu,Xu,Xu1,Cheng}, all  constructions about MUMEBs are for $\C^{d}\otimes \C^{d'}$ with $d\mid d'$. Are there  MUMEBs  in $\C^{d}\otimes \C^{d'}$ when $d\nmid d'$?  We next give an example in this case.

\begin{theorem}
\label{C23}
	There are two MUMEBs in $\C^{2}\otimes \C^{3}$. Namely, $M_{2}(2,3)\geq 2$.
\end{theorem}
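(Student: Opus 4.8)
The plan is to work entirely in the matrix picture of Sec.~\ref{Def}. Since $\min\{2,3\}=2$, a maximally entangled state of $\C^{2}\otimes\C^{3}$ corresponds, under the one-to-one relation, to a $2\times 3$ matrix $A$ whose nonzero singular values are $\tfrac{1}{\sqrt2},\tfrac{1}{\sqrt2}$, equivalently $AA^{\dag}=\tfrac12 I_{2}$ (so $\sqrt2\,A$ is a $2\times 3$ co-isometry, i.e.\ has orthonormal rows). Thus I need two ordered families $\{A_{i}\}_{i=1}^{6}$ and $\{B_{j}\}_{j=1}^{6}$ in $\mM_{2\times 3}$ with
\begin{equation*}
A_{i}A_{i}^{\dag}=B_{j}B_{j}^{\dag}=\tfrac12 I_{2},\qquad \tr(A_{i}^{\dag}A_{j})=\tr(B_{i}^{\dag}B_{j})=\delta_{ij},\qquad |\tr(A_{i}^{\dag}B_{j})|=\tfrac{1}{\sqrt6},
\end{equation*}
for all $1\le i,j\le 6$; because $\dim\mM_{2\times 3}=6$, the first two conditions automatically make each family an orthonormal basis, hence an MEB, and then Definition~\ref{defmu} yields two MUMEBs.

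For the first basis I would use a ``shift--and--multiply'' ansatz adapted to the rectangular shape: fix one co-isometry $\tilde A_{0}$ with every entry nonzero, multiply it on the left by the Pauli matrices $I_{2},\sigma_{x},\sigma_{y},\sigma_{z}$ and on the right by the $3\times 3$ clock and shift (Weyl--Heisenberg) unitaries, and then select a subfamily of size $6$ whose members are pairwise orthogonal. A purely block-diagonal guess $A=[\,X\mid 0\,]$ cannot work: anything orthogonal to all four $[\,\sigma_{i}\mid 0\,]$ has vanishing $2\times 2$ part, hence is supported on the third column and has rank $\le 1\ne 2$; so all six matrices must genuinely mix the third column with the first two, which is exactly what the $3\times 3$ shift achieves. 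For the second basis I would try a single fixed transformation $A\mapsto UAV$ — for instance with $V$ a Fourier-type $3\times 3$ matrix, or equivalently by reshaping a suitable $6\times 6$ complex Hadamard matrix of the Fourier family — tuned so that all $36$ overlaps flatten to $\tfrac{1}{\sqrt6}$.

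Once the explicit entries (built from $i$ and $\omega=e^{2\pi i/3}$) are written down, the verification is a finite computation: (i) check $A_{i}A_{i}^{\dag}=\tfrac12 I_{2}$ and $B_{j}B_{j}^{\dag}=\tfrac12 I_{2}$ for each index, so that every matrix is a $2$-singular-value-$\tfrac{1}{\sqrt2}$ matrix, i.e.\ a maximally entangled state; (ii) form the two $6\times 6$ Gram matrices $[\tr(A_{i}^{\dag}A_{j})]$ and $[\tr(B_{i}^{\dag}B_{j})]$ and confirm both equal $I_{6}$; (iii) form the $6\times 6$ cross matrix $[\tr(A_{i}^{\dag}B_{j})]$ and confirm every entry has modulus $\tfrac{1}{\sqrt6}$.

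The hard part is not this verification but the search. Arranging $AA^{\dag}=\tfrac12 I_{2}$ and intra-basis orthogonality separately is easy; the binding constraints are the $36$ modulus equations $|\tr(A_{i}^{\dag}B_{j})|=\tfrac{1}{\sqrt6}$, which must hold simultaneously with the co-isometry condition on \emph{both} bases — and this coupling is precisely why all earlier constructions were confined to $d\mid d'$. I expect the main effort to be finding a structured ansatz (Weyl--Heisenberg on the $\C^{3}$ side times Pauli on the $\C^{2}$ side, or a reverse-engineered $6\times 6$ complex Hadamard) under which all three systems of equations close at once; pinning down the right phases in $\tilde A_{0}$, $U$ and $V$ is the crux, after which everything reduces to the routine checks above.
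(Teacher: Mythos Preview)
Your plan stops short of a proof: you describe the verification but never produce the twelve matrices, and you explicitly defer the ``hard part'' (the search) to later. As written this is a programme, not an argument, and the two concrete hints you give about where to look are both misleading. First, the expectation that the entries will be ``built from $i$ and $\omega=e^{2\pi i/3}$'' is off: the paper's second MEB has matrices whose entries have \emph{unequal} moduli $1/\sqrt6$ and $\sqrt2/\sqrt6$ within each row, so no amount of Pauli/Weyl--Heisenberg phase juggling on a flat co-isometry will land on them. Second, you aim the freedom at a Fourier-type $3\times3$ matrix $V$ on the right, whereas the paper takes $V=I$ and puts \emph{all} of the freedom into a single $2\times 2$ unitary on the left.

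The paper's actual route is much more constrained than your search, which is why it closes. Both bases share the \emph{same} rectangular shift-and-sign skeleton
\[
\ket{\phi_{m,n}}=\tfrac{1}{\sqrt2}\sum_{p=0}^{1}(-1)^{np}\,\ket{e_p}\,\ket{(p\oplus m)'},\qquad m\in\{0,1,2\},\ n\in\{0,1\},
\]
with the first basis using the computational frame $\ket{e_p}=\ket{p}$ and the second using a rotated frame $\ket{e_p}=\ket{a_p}$ coming from a $2\times 2$ unitary $A$. The MUB condition then collapses to three modulus constraints on the entries of $A$: $|A_{00}\pm A_{11}|=2/\sqrt6$ (forcing $|A_{00}|=|A_{11}|=1/\sqrt3$ and $A_{00}\perp A_{11}$) and $|A_{01}|=|A_{10}|=\sqrt2/\sqrt3$. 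These are solved by, e.g.,
\[
A=\tfrac{1}{\sqrt3}\begin{pmatrix}1&-\sqrt2\,i\\ \sqrt2&i\end{pmatrix},
\]
after which the two explicit six-element bases are written down and the $36$ overlaps are immediate. The decisive idea you are missing is this reduction to a single $2\times 2$ unitary with \emph{non-Hadamard} (unequal-modulus) entries; once you have that, there is no search left.
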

\begin{proof}
	Let
	$$\ket{\phi_{m,n}}=\frac{1}{\sqrt{2}}\sum_{p=0}^{1}(-1)^{np}\ket{p}\ket{(p\oplus m)'},$$ where $ n=0,1$, $m=0,1,2,$ and
	 $p\oplus m:=p+m \pmod 3$. It is easy to check that $\{\ket{\phi_{m,n}}\}$ is an MEB in $\C^{2}\otimes \C^{3}$.
	Let $A$ be a $2\times2$ unitary matrix
	$$
	A=\left(
	\begin{matrix}
	A_{00} & A_{01}\\
	A_{10} & A_{11}
	\end{matrix}\right),
	$$
	and let
	$$
	\left(
	\begin{matrix}
	\ket{a_{0}}\\
	\ket{a_{1}}
	\end{matrix}\right)=\left(
	\begin{matrix}
	A_{00} & A_{01}\\
	A_{10} & A_{11}
	\end{matrix}\right)\left(
	\begin{matrix}
	\ket{0}\\
	\ket{1}
	\end{matrix}\right).
	$$
	Denote
	$$\ket{\psi_{m,n}}=\frac{1}{\sqrt{2}}\sum_{p=0}^{1}(-1)^{np}\ket{a_{p}}\ket{(p\oplus m)'},$$ where $ n=0,1$ and $m=0,1,2.$
	Then $\{\ket{\psi_{m,n}}\}$ is also an MEB in $\C^{2}\otimes \C^{3}$.
	By definition, $\{\ket{\phi_{m,n}}\}$ and $\{\ket{\psi_{m,n}}\}$ are two MUMEBs  if and only if
	$$\left|\sum_{p_{2}=0}^{1}\sum_{p_{1}=0}^{1}(-1)^{\lambda}\braket{(p_{1}\oplus m_{1})'}{(p_{2}\oplus m_{2})'}A_{p_{2},p_{1}}\right|=\frac{2}{\sqrt{6}},$$for all $ \lambda=0,1$  and  $m_{1},m_{2}=0,1,2.$
	When $m_{1}=m_{2}$, we have  $|A_{0,0}\pm A_{1,1}|=\frac{2}{\sqrt{6}}$, which implies  $A_{0,0}\bot A_{1,1}$ and $|A_{0,0}|=|A_{1,1}|=\frac{1}{\sqrt{3}}$; when $m_{2}\equiv m_{1}+1 \pmod 3$,  we have  $|A_{0,1}|=\frac{2}{\sqrt{6}}$; when $m_{1}\equiv m_{2}+1 \pmod 3$,  we have  $|A_{1,0}|=\frac{2}{\sqrt{6}}$.
	So we can assume that
	\begin{equation}\label{aia}
	A=\frac{1}{\sqrt{3}}\left(
	\begin{matrix}
	e^{i\theta_{1}} &\sqrt{2}e^{i\theta_{2}}\\
	\sqrt{2}e^{i\theta_{3}} &e^{i(\theta_{1}+\frac{\pi}{2})}
	\end{matrix}\right),
	\end{equation} for some $\theta_i$, $i=1,2,3$.
	Since $A$ is a unitary matrix, $\theta_{i}$ must satisfy the following condition:
	\begin{equation}\label{iai}
	\theta_{2}+\theta_{3}-2\theta_{1}=2k\pi+\frac{3\pi}{2},   \text{ for some }  k\in Z.
	\end{equation}
	Taking
	$$\theta_{1}=0,\ \theta_{2}=\frac{3\pi}{2},\ \theta_{3}=0,$$
	we have
	$$
	A=\frac{1}{\sqrt{3}}\left(
	\begin{matrix}
	1        &-\sqrt{2}i\\
	\sqrt{2} &i
	\end{matrix}\right) \text{ and}
	$$
	
	$$
	\ket{a_{0}}=\frac{1}{\sqrt{3}}(\ket{0}-\sqrt{2}i\ket{1}),\ \ket{a_{1}}=\frac{1}{\sqrt{3}}(\sqrt{2}\ket{0}+i\ket{1}).
	$$
	Then $\{\ket{\phi_{m,n}}\}$ and $\{\ket{\psi_{m,n}}\}$ are as follows:
	\begin{equation}\label{16}
	\left\{
	\begin{aligned}
	&\ket{\phi_{0,0}}=\frac{1}{\sqrt{2}}(\ket{00'}+\ket{11'})\\
	&\ket{\phi_{0,1}}=\frac{1}{\sqrt{2}}(\ket{00'}-\ket{11'})\\
	&\ket{\phi_{1,0}}=\frac{1}{\sqrt{2}}(\ket{01'}+\ket{12'})\\
	&\ket{\phi_{1,1}}=\frac{1}{\sqrt{2}}(\ket{01'}-\ket{12'})\\
	&\ket{\phi_{2,0}}=\frac{1}{\sqrt{2}}(\ket{02'}+\ket{10'})\\
	&\ket{\phi_{2,1}}=\frac{1}{\sqrt{2}}(\ket{02'}-\ket{10'})
	\end{aligned}
	\right. ,
	\end{equation}
	\begin{equation}\label{17}
	\left\{
	\begin{aligned}
	&\ket{\psi_{0,0}}=\frac{1}{\sqrt{6}}(\ket{00'}+\sqrt{2}\ket{01'}-\sqrt{2}i\ket{10'}+i\ket{11'})\\
	&\ket{\psi_{0,1}}=\frac{1}{\sqrt{6}}(\ket{00'}-\sqrt{2}\ket{01'}-\sqrt{2}i\ket{10'}-i\ket{11'})\\
	&\ket{\psi_{1,0}}=\frac{1}{\sqrt{6}}(\ket{01'}+\sqrt{2}\ket{02'}-\sqrt{2}i\ket{11'}+i\ket{12'})\\
	&\ket{\psi_{1,1}}=\frac{1}{\sqrt{6}}(\ket{01'}-\sqrt{2}\ket{02'}-\sqrt{2}i\ket{11'}-i\ket{12'})\\
	&\ket{\psi_{2,0}}=\frac{1}{\sqrt{6}}(\ket{02'}+\sqrt{2}\ket{00'}-\sqrt{2}i\ket{12'}+i\ket{10'})\\
	&\ket{\psi_{2,1}}=\frac{1}{\sqrt{6}}(\ket{02'}-\sqrt{2}\ket{00'}-\sqrt{2}i\ket{12'}-i\ket{10'})
	\end{aligned}
	\right. .
	\end{equation}
	
	Thus the above $\{\ket{\phi_{m,n}}\}$ and $\{\ket{\psi_{m,n}}\}$ are two MUMEBs in $\C^{2}\otimes \C^{3}$.
\end{proof}

Note that 	we can also take
	$\theta_{1}=\frac{\pi}{4},\ \theta_{2}=\frac{3\pi}{4},\ \theta_{3}=\frac{5\pi}{4}$ in Eq. (\ref{aia}), then
	$$
	A=\frac{1}{\sqrt{3}}\left(
	\begin{matrix}
	\frac{\sqrt{2}}{2}(1+i)        &-1+i\\
	-1-i&\frac{\sqrt{2}}{2}(-1+i)
	\end{matrix}\right) \text{ and}
	$$
\begin{equation}\label{othermu}
  \ket{a_{0}}  =\frac{1}{\sqrt{3}}(\frac{\sqrt{2}}{2}(1+i)\ket{0}+(-1+i)\ket{1}),
  \ket{a_{1}} =\frac{1}{\sqrt{3}}((-1-i)\ket{0}+\frac{\sqrt{2}}{2}(-1+i)\ket{1}).
\end{equation}
	The above $\{\ket{\phi_{m,n}}\}$ and $\{\ket{\psi_{m,n}}\}$ are also two MUMEBs in $\C^{2}\otimes \C^{3}$. Unfortunately, if $A_{1}$ and $A_{2}$ are two matrices that satisfy the conditions of Eq.~(\ref{aia}) and Eq.~(\ref{iai}), then $A_{1}A_{2}$ cannot satisfy these conditions. It means we cannot construct three MUMEB$s$ in $\C^{2}\otimes \C^{3}$ by this way.

 We should mention that our method of constructing two MUMEBs in $\C^{2}\otimes \C^{3}$  in Theorem~\ref{C23} is similar to that  of constructing MUMEBs in $\C^{d}\otimes \C^{kd}$ from \cite{Tao}, where  a unitary matrix $A$ of size $kd\times kd$ is applied on the latter subspace $\C^{kd}$. However,  using their method, we find that such a $3\times 3$ unitary  matrix $A$ does not exist. So we cannot construct MUMEBs in $\C^{2}\otimes \C^{3}$  by the method in \cite{Tao}.


Before closing this section, we convert $\{\ket{\phi_{m,n}}\}$ and $\{\ket{\psi_{m,n}}\}$ of Eq.~(\ref{16}) and Eq.~(\ref{17}) to  MUMEBs $\mathcal{R}_{1}$ and $\mathcal{R}_{2}$ in $\mathcal{M}_{2\times 3}$,  respectively.
\begin{align*}
\mathcal{R}_{1}=\{&\frac{1}{\sqrt{2}}\left(
\begin{matrix}
1  & 0  & 0\\
0  & 1  & 0
\end{matrix}\right), \ \frac{1}{\sqrt{2}}\left(
\begin{matrix}
1  & 0 & 0\\
0  & -1 & 0
\end{matrix}\right), \ \frac{1}{\sqrt{2}}\left(
\begin{matrix}
0  & 1     & 0\\
0  & 0     & 1
\end{matrix}\right), \\
&\frac{1}{\sqrt{2}}\left(
\begin{matrix}
0   & 1  & 0\\
0   & 0  & -1
\end{matrix}\right), \ \frac{1}{\sqrt{2}}\left(
\begin{matrix}
0     & 0   & 1\\
1     & 0   & 0
\end{matrix}
\right), \ \frac{1}{\sqrt{2}}\left(
\begin{matrix}
0       & 0      & 1\\
-1       & 0      &0
\end{matrix}
\right)\},\\
\mathcal{R}_{2}=\{&\frac{1}{\sqrt{6}}\left(
\begin{matrix}
1          & \sqrt{2} & 0\\
-\sqrt{2}i & i       & 0
\end{matrix}\right), \ \frac{1}{\sqrt{6}}\left(
\begin{matrix}
1          & -\sqrt{2} & 0\\
-\sqrt{2}i & -i       & 0
\end{matrix}
\right), \\
&\frac{1}{\sqrt{6}}\left(
\begin{matrix}
0  & 1   & \sqrt{2}\\
0  & -\sqrt{2}i   & i
\end{matrix}
\right),
\frac{1}{\sqrt{6}}\left(
\begin{matrix}
0  & 1   & -\sqrt{2}\\
0  & -\sqrt{2}i   & -i
\end{matrix}\right), \\
&\frac{1}{\sqrt{6}}\left(
\begin{matrix}
\sqrt{2}& 0   & 1\\
i     & 0   &-\sqrt{2}i
\end{matrix}
\right), \ \frac{1}{\sqrt{6}}\left(
\begin{matrix}
-\sqrt{2}& 0   & 1\\
-i     & 0   &-\sqrt{2}i
\end{matrix}
\right)\}.
\end{align*}

\section{Application}\label{sec:appli}
In this section, we propose an application of the two MUMEBs in $\C^2\otimes\C^3$ constructed in Eqs. (\ref{16}) and (\ref{17}) in Theorem \ref{thm:mumeb}. This is to tackle the special case of a long-standing open problem, namely whether four six-dimensional MUBs exist. The study of MUBs has extensive physical applications in quantum cryptography, tomography, and the construction of Wigner functions. It has been shown that there exist three six-dimensional MUBs, and widely believed that four six-dimensional MUBs may not exist in spite of great efforts from mathematical and physical community
\cite{Boykin05,bw08,bw09,jmm09,mub09,bw10,deb10,wpz11,mw12ijqi,mw12jpa135307,rle11,Sz12,Goyeneche13,mw12jpa102001,mb15,mpw16,
	cy17,chen2018mutually,liang2019h_2}.

To explain the application, we refer to the vectors in an orthonormal basis in $\C^n$ as the column vectors of a unitary matrix. In particular we refer to the computational basis as the identity matrix. Hence, the existence of four six-dimensional MUBs is equivalent to the existence of the identity matrix, and three complex Hadamard matrices (CHMs) which are MU. If they exist then we refer to the three CHMs as the so-called MUB trio \cite{cy17}. From Lemma 11 (ii) and matrix $Y_6$ of \cite{cy17}, we have found the following criterion of determining when a CHM belongs to an MUB trio.
\begin{lemma}
	\label{le:mub=trio}
	\begin{enumerate}[(i)]
	\item  A CHM belongs to an MUB trio if and only if so does its transpose.
	
	\item  If a CHM has a $2\times3$ real submatrix then it does not belong to any MUB trio.
		\end{enumerate}
\end{lemma}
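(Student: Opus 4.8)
The plan is to prove the two parts by rather different means: (i) comes from elementary symmetries of the mutual-unbiasedness relation, while (ii) is where the structural input from \cite{cy17} is essential. For (i), let $\mathcal{S}$ denote the set of $6\times6$ CHMs belonging to some MUB trio; thus $H\in\mathcal{S}$ iff there are CHMs $K,L$ such that $h:=\tfrac1{\sqrt6}H$, $k:=\tfrac1{\sqrt6}K$, $l:=\tfrac1{\sqrt6}L$ are flat unitary matrices (every entry of modulus $1/\sqrt6$) with $\{I,h,k,l\}$ pairwise mutually unbiased. I would record two closure properties of $\mathcal{S}$. First, $\mathcal{S}$ is closed under complex conjugation: since $|\overline{A}^{\dagger}\overline{B}|=|\overline{A^{\dagger}B}|=|A^{\dagger}B|$ entrywise, conjugating every member of a trio again yields a trio. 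Second, $\mathcal{S}$ is closed under the adjoint: if $\{I,h,k,l\}$ are mutually unbiased flat unitaries, then so are $\{I,h^{\dagger},h^{\dagger}k,h^{\dagger}l\}$, because $h^{\dagger}$ is flat, $h^{\dagger}k$ and $h^{\dagger}l$ are unitary with constant-modulus entries (this being precisely the unbiasedness of $h$ with $k$ and with $l$), $(h^{\dagger})^{\dagger}(h^{\dagger}k)=k$ and $(h^{\dagger})^{\dagger}(h^{\dagger}l)=l$ are flat, and $(h^{\dagger}k)^{\dagger}(h^{\dagger}l)=k^{\dagger}l$ is flat. Since $H^{T}=\overline{H^{\dagger}}$, composing these two closures shows $\mathcal{S}$ is closed under transposition, which is exactly (i); this is also compatible with the symmetry considerations in \cite{cy17}.

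For (ii), suppose toward a contradiction that a CHM $H$ with a $2\times3$ real submatrix lies in an MUB trio $\{H,K,L\}$. Replacing $(H,K,L)$ by $(PHQ,PKQ,PLQ)$ for monomial (permutation times diagonal-unitary) matrices $P,Q$ preserves the CHM property of each member and keeps $\{I,H,K,L\}$ a set of four MUBs; with such operations I would move the real $2\times3$ block of $H$ into rows $1,2$ and columns $1,2,3$, and then dephase $H$ so that its first row and first column are all $1$'s. Because this real block contains row $1$ and column $1$, the dephasing rescales columns $1,2,3$ and rows $1,2$ only by signs, so afterwards the top-left $2\times3$ block of $H$ has first row $(1,1,1)$ and second row $(1,\epsilon_2,\epsilon_3)$ with $\epsilon_2,\epsilon_3\in\{\pm1\}$. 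The crux is then to feed this into the structure theory of trio-completable CHMs in \cite{cy17}: by Lemma~11(ii) together with the explicit matrix $Y_6$ of \cite{cy17}, a dephased $6\times6$ CHM can be completed to an MUB trio only if its entries satisfy a rigid system of equations, and I would show that a dephased CHM whose first two rows are real on the first three columns cannot satisfy it — concretely, the orthogonality relations among the columns of $H$ together with the unbiasedness equations linking $H$ to $K$ propagate the reality to enough further entries of $H$ (and pin down enough of $K$) that $H$ becomes equivalent to a member of the $Y_6$ family, which \cite{cy17} shows belongs to no trio. This contradiction gives (ii).

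I expect the genuine obstacle to be this last step of (ii): extracting from \cite{cy17} the precise normal form of a trio-completable CHM and verifying that a real $2\times3$ block is incompatible with it, i.e.\ carefully tracking which entries of $K$ (and $L$) are pinned down by each orthogonality and unbiasedness equation once the reality of the block is assumed — which is exactly the bookkeeping that the matrix $Y_6$ of \cite{cy17} is designed to carry out. By contrast, the normalisation by monomial equivalence and all of part (i) are routine. Finally, since a $2\times3$ real submatrix of $H$ is a $3\times2$ real submatrix of $H^{T}$, combining (ii) with (i) also excludes any CHM with a $3\times2$ real submatrix from belonging to a trio, so both orientations of the forbidden pattern are handled, which is what the intended application to the two MUMEBs of $\C^{2}\otimes\C^{3}$ requires.
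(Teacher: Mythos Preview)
The paper does not actually prove this lemma: it is stated immediately after the sentence ``From Lemma~11~(ii) and matrix $Y_6$ of \cite{cy17}, we have found the following criterion\ldots'' and is simply quoted from that reference, with no argument supplied. So there is no ``paper's own proof'' to match against; your task was really to reconstruct a result the authors imported wholesale.

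Your argument for (i) is correct and self-contained. The two closure properties you record---under entrywise conjugation and under passage from $\{I,h,k,l\}$ to $\{I,h^{\dagger},h^{\dagger}k,h^{\dagger}l\}$---are exactly the standard symmetries of the MUB relation, and combining them gives $H^{T}=\overline{H^{\dagger}}\in\mathcal{S}$ whenever $H\in\mathcal{S}$. This is more than the paper offers, and is a clean way to recover what \cite{cy17} presumably proves by the same mechanism.

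For (ii) your reduction by monomial equivalence and dephasing is sound: applying the \emph{same} monomial matrices $P,Q$ on the left and right of every member of the trio preserves both flatness and all pairwise unbiasedness relations, and your observation that the dephasing scalars on the rows and columns meeting the real block are $\pm1$ (hence the block stays real) is correct. Where you yourself flag the gap---turning the normalized real $2\times3$ block into an obstruction via Lemma~11(ii) and the matrix $Y_6$ of \cite{cy17}---is genuinely the entire content of the statement, and there is no way around importing that structural result. Your sketch of how the obstruction should arise (propagating reality through orthogonality and unbiasedness constraints until one lands in a forbidden family) is plausible in spirit, but the precise mechanism depends on exactly what Lemma~11(ii) of \cite{cy17} asserts; you should not speculate further than the citation allows. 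In short: your (i) is a genuine proof, your (ii) is an honest reduction to the cited source, and that is exactly what the paper itself does, only with less detail.
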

Now we are in a position to present the application of the two MUMEBs in Eqs. (\ref{16}) and (\ref{17}).
\begin{theorem}
	\label{thm:mumeb}	
	The two MUMEBs in Eqs. (\ref{16}) and (\ref{17}) cannot be extended to four MUBs.	
\end{theorem}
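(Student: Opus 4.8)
The plan is to derive a contradiction from Lemma~\ref{le:mub=trio}. Identify $\C^{2}\otimes\C^{3}$ with $\C^{6}$ by ordering the computational basis as $\ket{00'},\ket{01'},\ket{02'},\ket{10'},\ket{11'},\ket{12'}$, and record the two MUMEBs of Eqs.~(\ref{16}) and (\ref{17}) as the column sets of $6\times6$ unitary matrices $U_{1}$ and $U_{2}$; observe from Eq.~(\ref{16}) that $U_{1}$ is \emph{real}. Suppose these two bases could be extended to four MUBs $\mathcal{B}_{1}=U_{1},\mathcal{B}_{2}=U_{2},\mathcal{B}_{3}=U_{3},\mathcal{B}_{4}=U_{4}$. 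Applying the global unitary $U_{1}^{\dagger}$, which preserves mutual unbiasedness, turns $\mathcal{B}_{1}$ into the identity $I$ and $\mathcal{B}_{2}$ into $H:=U_{1}^{\dagger}U_{2}$; since the four bases are pairwise unbiased, $\sqrt{6}\,H,\ \sqrt{6}\,U_{1}^{\dagger}U_{3},\ \sqrt{6}\,U_{1}^{\dagger}U_{4}$ are complex Hadamard matrices of order $6$ which, together with $I$, form four MUBs. In particular $\sqrt{6}\,H$ belongs to an MUB trio.

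First I would compute $H$ explicitly. Writing $A^{(1)}_{m,n}$ (resp.\ $A^{(2)}_{m,n}$) for the matrix in $\mathcal{M}_{2\times3}$ attached to $\ket{\phi_{m,n}}$ (resp.\ $\ket{\psi_{m,n}}$) --- these are precisely the matrices of $\mathcal{R}_{1}$ and $\mathcal{R}_{2}$ above --- one has $H_{(m_{1},n_{1}),(m_{2},n_{2})}=\braket{\phi_{m_{1},n_{1}}}{\psi_{m_{2},n_{2}}}=\text{Tr}\big((A^{(1)}_{m_{1},n_{1}})^{\dagger}A^{(2)}_{m_{2},n_{2}}\big)$. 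Since each $A^{(1)}_{m,n}$ is real with exactly two nonzero entries $\pm\tfrac{1}{\sqrt2}$, every entry of $H$ is a two-term real combination of entries of a single $A^{(2)}_{m_{2},n_{2}}$, and $|H_{ij}|=\tfrac{1}{\sqrt6}$ automatically since $\mathcal{B}_{1},\mathcal{B}_{2}$ are mutually unbiased. The key local observation is that in $\sqrt6\,H$ the two rows indexed by $\phi_{0,0},\phi_{0,1}$, restricted to the four columns indexed by $\psi_{1,0},\psi_{1,1},\psi_{2,0},\psi_{2,1}$, have all entries in $\{1,-1,i,-i\}$; more precisely the $\psi_{1,0},\psi_{1,1}$ entries are purely imaginary ($\pm i$) and the $\psi_{2,0},\psi_{2,1}$ entries are real ($\pm1$).

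Next I would right-multiply $\sqrt6\,H$ by the diagonal unitary $\diag(1,1,i,i,1,1)$, that is, multiply its $\psi_{1,0},\psi_{1,1}$ columns by $i$. This is a monomial transformation, and it preserves membership in an MUB trio: if $\{I,G,K,L\}$ are four MUBs and $D$ is a diagonal unitary, then $\{I,GD,K,L\}$ are again four MUBs, since all relevant inner products change only by unimodular factors. After this twist, the $2\times4$ submatrix of the resulting CHM on rows $\phi_{0,0},\phi_{0,1}$ and columns $\psi_{1,0},\psi_{1,1},\psi_{2,0},\psi_{2,1}$ has become real (its rows are $(1,1,1,-1)$ and $(-1,-1,1,-1)$), so that CHM contains a $2\times3$ real submatrix. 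By Lemma~\ref{le:mub=trio}(ii) it lies in no MUB trio, hence neither does $\sqrt6\,H$ --- contradicting the previous paragraph. (Lemma~\ref{le:mub=trio}(i) yields the same conclusion after transposing.) Therefore the two MUMEBs of Eqs.~(\ref{16}) and (\ref{17}) cannot be extended to four MUBs.

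The computation of $H$ is routine bookkeeping; the delicate point --- the main obstacle --- is that $H$ itself has no $2\times3$ real submatrix, so one must first apply the column-phase twist and then verify that this monomial transformation does not affect MUB-trio membership before Lemma~\ref{le:mub=trio} applies.
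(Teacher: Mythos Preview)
Your proposal is correct and follows essentially the same route as the paper: write the two MUMEBs as unitaries, pass to $U_1^{\dagger}U_2$ (a CHM up to scaling), right-multiply by a diagonal unitary to expose a real $2\times3$ submatrix, and invoke Lemma~\ref{le:mub=trio}(ii). The paper uses the diagonal $Q=\diag(-i,-i,1,1,1,1)$ and finds the real block in the lower-left corner, whereas you use $\diag(1,1,i,i,1,1)$ and locate a (larger) real $2\times4$ block in rows $\phi_{0,0},\phi_{0,1}$; both choices work, and your explicit justification that diagonal column-phasing preserves MUB-trio membership fills in a step the paper leaves implicit.
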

\begin{proof}
	In the following, we express the two MUMEBs as two unitary matrices $U$ and $V$, respectively.
	$$
	U=	\left(
	\begin{matrix}
	\frac{1}{\sqrt{2}} & \frac{1}{\sqrt{2}} & 0 & 0 & 0 & 0 \\
	0&0 & \frac{1}{\sqrt{2}} & \frac{1}{\sqrt{2}} &0 &0 \\
	0&0 &0 &0 & \frac{1}{\sqrt{2}} & \frac{1}{\sqrt{2}} \\
	0&0 &0 &0 & \frac{1}{\sqrt{2}} & -\frac{1}{\sqrt{2}}\\
	\frac{1}{\sqrt{2}} & -\frac{1}{\sqrt{2}} & 0&0 &0 &0 \\
	0&0 & \frac{1}{\sqrt{2}} & -\frac{1}{\sqrt{2}} & 0& 0	
	\end{matrix}
	\right),
	$$
	and
	$$
	V=\left(
	\begin{matrix}
	\frac{1}{\sqrt{6}} & \frac{1}{\sqrt{6}} & 0 & 0 & \frac{\sqrt{2}}{\sqrt{6}} & -\frac{\sqrt{2}}{\sqrt{6}} \\
	\frac{\sqrt{2}}{\sqrt{6}} & -\frac{\sqrt{2}}{\sqrt{6}} &
		\frac{1}{\sqrt{6}} & \frac{1}{\sqrt{6}} & 0 & 0 \\
		0 & 0 & \frac{\sqrt{2}}{\sqrt{6}}  & -\frac{\sqrt{2}}{\sqrt{6}} & \frac{1}{\sqrt{6}} & \frac{1}{\sqrt{6}} \\
		-\frac{\sqrt{2}i}{\sqrt{6}}  & -\frac{\sqrt{2}i}{\sqrt{6}}  & 0 & 0 & \frac{i}{\sqrt{6}} & -\frac{i}{\sqrt{6}} \\
		\frac{i}{\sqrt{6}} & -\frac{i}{\sqrt{6}} & -\frac{\sqrt{2}i}{\sqrt{6}} &-\frac{\sqrt{2}i}{\sqrt{6}} & 0 & 0 \\
		0& 0& \frac{i}{\sqrt{6}} & -\frac{i}{\sqrt{6}} &
		-\frac{\sqrt{2}i}{\sqrt{6}}  & -\frac{\sqrt{2}i}{\sqrt{6}}
		\end{matrix}
		\right).
	$$
	
	One can verify that the lower left $2\times3$ submatrix of $U^{\dagger} VQ$ is
	$\left(
	\begin{matrix}
	-\frac{1}{\sqrt{6}} & -\frac{1}{\sqrt{6}} & \frac{1}{\sqrt{6}}	\\
     \frac{1}{\sqrt{6}} &\frac{1}{\sqrt{6}}&\frac{1}{\sqrt{6}}
     \end{matrix}
	\right)$, where the diagonal unitary $Q=\diag(-i,-i,1,1,1,1)$. Hence $U^{\dagger} V Q$ does not belong to any MUB trio in terms of Lemma \ref{le:mub=trio}. We have proven the assertion.
\end{proof}
Our results restrict the form of MUB trio, if it really exists.  Following the proof of Theorem \ref{thm:mumeb}, one can similarly show that the other pair of MUMEBs constructed at Eq.~(\ref{othermu}) cannot be extended to four MUBs too. We may conjecture that, any two MUMEBs cannot be extended to four MUBs.

\section{A Recursive Construction of MUSEBks}\label{Mks}

In this section, we give a construction of MUSEB$k_{1}k_{2}$s in $\C^{pd}\otimes \C^{qd'}$ from  MUSEB$k_{1}$s in $\C^{d}\otimes \C^{d'}$ and MUSEB$k_{2}$s in $\C^{p}\otimes \C^{q}$ for any positive integers $d$, $d'$, $p$ and $q$.

For any two sets $\mathcal{S}=\{A_{1},A_{2},\ldots,A_{n}\}$ and  $\mathcal{T}=\{B_{1},B_{2},\ldots,B_{m}\}$, define $\mathcal{S}^{\mathrm{T}}:=\{A_{1}^{\mathrm{T}},A_{2}^{\mathrm{T}},\cdots,A_{n}^{\mathrm{T}}\},$ and
$$\mathcal{S}\otimes\mathcal{T}:=\{A_{i}\otimes B_{j}\mid i=1,\ldots, n, j=1,\ldots,m\}.$$

\begin{theorem}
\label{mum}
	If there are $t_{1}$ MUSEB$k_{1}$s in $\C^{d}\otimes \C^{d'}$ and $t_{2}$ MUSEB$k_{2}$s in $\C^{p}\otimes \C^{q}$, then there are $ \min \{t_{1},t_{2}\}$ MUSEB$k_{1}k_{2}$s in $\C^{pd}\otimes \C^{qd'}$. Namely,
	\begin{equation}
	M_{k_{1}k_{2}}(pd,qd')\geq \min\{M_{k_{1}}(d,d'),M_{k_{2}}(p,q)\}.
	\end{equation}
\end{theorem}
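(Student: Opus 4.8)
The plan is to work entirely in the matrix-space picture of Definition~\ref{defmu}, using the stated one-to-one correspondence between $\mathcal{M}_{d\times d'}$ and $\C^{d}\otimes\C^{d'}$, and to realize the target space $\C^{pd}\otimes\C^{qd'}$ as $\mathcal{M}_{pd\times qd'}$, which factorizes as $\mathcal{M}_{p\times q}\otimes\mathcal{M}_{d\times d'}$ under the Kronecker product of matrices. Write $t=\min\{t_{1},t_{2}\}$. Fix $t$ of the given MUSEB$k_{1}$s, say $\mathcal{P}_{1},\dots,\mathcal{P}_{t}$ in $\mathcal{M}_{d\times d'}$, and $t$ of the given MUSEB$k_{2}$s, say $\mathcal{Q}_{1},\dots,\mathcal{Q}_{t}$ in $\mathcal{M}_{p\times q}$. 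The construction is the \emph{index-matched} family $\mathcal{C}_{r}:=\mathcal{Q}_{r}\otimes\mathcal{P}_{r}$ for $r=1,\dots,t$, where the tensor product of two sets of matrices is taken entrywise as defined just before the theorem. I then claim that $\mathcal{C}_{1},\dots,\mathcal{C}_{t}$ are $t$ MUSEB$k_{1}k_{2}$s in $\mathcal{M}_{pd\times qd'}$, which transports back to the desired bases in $\C^{pd}\otimes\C^{qd'}$.

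The verification splits into two routine parts, both reducing to elementary Kronecker-product identities. First, each $\mathcal{C}_{r}$ is an SEB$k_{1}k_{2}$: it contains $pq\cdot dd'=(pd)(qd')$ matrices; for $B\in\mathcal{Q}_{r}$ and $A\in\mathcal{P}_{r}$ the singular values of $B\otimes A$ are the pairwise products of those of $B$ and of $A$, so since $B$ has exactly $k_{2}$ nonzero singular values all equal to $1/\sqrt{k_{2}}$ and $A$ has exactly $k_{1}$ all equal to $1/\sqrt{k_{1}}$, the matrix $B\otimes A$ has exactly $k_{1}k_{2}$ nonzero singular values, each equal to $\tfrac{1}{\sqrt{k_{2}}}\cdot\tfrac{1}{\sqrt{k_{1}}}=\tfrac{1}{\sqrt{k_{1}k_{2}}}$; and orthonormality follows from $\tr\!\left((B\otimes A)^{\dagger}(B'\otimes A')\right)=\tr(B^{\dagger}B')\,\tr(A^{\dagger}A')=\delta_{BB'}\delta_{AA'}$. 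Second, mutual unbiasedness: for $r\neq s$, any $B\otimes A\in\mathcal{C}_{r}$ and $B'\otimes A'\in\mathcal{C}_{s}$ satisfy
\[
\left|\tr\!\left((B\otimes A)^{\dagger}(B'\otimes A')\right)\right|
=\left|\tr(B^{\dagger}B')\right|\cdot\left|\tr(A^{\dagger}A')\right|
=\frac{1}{\sqrt{pq}}\cdot\frac{1}{\sqrt{dd'}}=\frac{1}{\sqrt{(pd)(qd')}},
\]
using that $\mathcal{Q}_{r},\mathcal{Q}_{s}$ are mutually unbiased in $\mathcal{M}_{p\times q}$ and $\mathcal{P}_{r},\mathcal{P}_{s}$ are mutually unbiased in $\mathcal{M}_{d\times d'}$. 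This establishes the theorem and hence the displayed inequality $M_{k_{1}k_{2}}(pd,qd')\geq\min\{M_{k_{1}}(d,d'),M_{k_{2}}(p,q)\}$.

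Two points deserve attention, though neither is a hard estimate. One is the well-definedness requirement $1\le k_{1}k_{2}\le\min\{pd,qd'\}$ for the target Schmidt number; this is automatic, since $k_{1}\le\min\{d,d'\}$ and $k_{2}\le\min\{p,q\}$ give $k_{1}k_{2}\le\min\{d,d'\}\min\{p,q\}\le\min\{pd,qd'\}$. The other — and this is the only conceptual step — is the necessity of the \emph{diagonal} pairing: if one instead tried to use all products $\mathcal{Q}_{i}\otimes\mathcal{P}_{j}$, then two such sets with $i=i'$ but $j\neq j'$ would have inner products of the form $\delta_{BB'}\,|\tr(A^{\dagger}A')|$, which does not have the constant modulus required for mutual unbiasedness; only the $\min\{t_{1},t_{2}\}$ index-matched pairs make both the SEB condition and the unbiasedness condition factor multiplicatively. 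So the "obstacle" is recognizing that matching indices is exactly what is needed; everything else is bookkeeping with Kronecker products.
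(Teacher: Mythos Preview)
Your proof is correct and follows essentially the same approach as the paper: both work in the matrix-space picture, form the index-matched Kronecker products $\mathcal{Q}_r\otimes\mathcal{P}_r$ (the paper writes them in the opposite order, $\mathcal{S}_r\otimes\mathcal{T}_r$, which is immaterial), and verify the SEB$k_1k_2$ and mutual-unbiasedness conditions via the multiplicativity of singular values and traces under Kronecker product. Your additional remarks on the well-definedness bound $k_1k_2\le\min\{pd,qd'\}$ and on why the diagonal pairing is forced are not in the paper's proof but are welcome clarifications.
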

\begin{proof}We prove it by using the language of MUSEB$k$s in $\mathcal{M}_{d\times d'}$. Let $\mathcal{S}_{1}=\{A_{1},A_{2},\ldots,A_{dd'}\}$ and $\mathcal{S}_{2}=\{B_{1},B_{2},\ldots,B_{dd'}\}$ be two  MUSEB$k_{1}$s in $\mathcal{M}_{d\times d'}$, that is, $A_{i},B_{j}$ are $k_1$-singular-value-$\frac{1}{\sqrt{k_1}}$ matrices,  $Tr(A_{i}^{\dag}A_{j})=\delta_{i,j}$, $Tr(B_{i}^{\dag}B_{j})=\delta_{i,j}$ and $|\text{Tr}(A_{i}^{\dag}B_{j})|=\frac{1}{\sqrt{dd'}}$.
	
	Similarly, let $\mathcal{T}_{1}=\{C_{1},C_{2},\ldots,C_{pq}\}$ and  $\mathcal{T}_{2}=\{D_{1},D_{2},\ldots,D_{pq}\}$  be two  MUSEB$k_{2}$s in $\mathcal{M}_{p\times q}$, then $C_{i},D_{j}$ each have $k_{2}$  singular values $\frac{1}{\sqrt{k_{2}}}$, $Tr(C_{i}^{\dag}C_{j})=\delta_{i,j}$, $Tr(D_{i}^{\dag}D_{j})=\delta_{i,j}$ and  $|\text{Tr}(C_{i}^{\dag}D_{j})|=\frac{1}{\sqrt{pq}}$.
	
	Then $\mathcal{S}_{1}\otimes \mathcal{T}_{1}$ and $\mathcal{S}_{2}\otimes \mathcal{T}_{2}$ are two sets of $dd'pq$ matrices in $\mathcal{M}_{pd\times qd'}$. We prove that they are two  MUSEB$k_{1}k_{2}$s in $\mathcal{M}_{pd\times qd'}$.
	
	First, we show that $\mathcal{S}_{1}\otimes \mathcal{T}_{1}$ and $\mathcal{S}_{2}\otimes \mathcal{T}_{2}$ are two SEB$k_{1}k_{2}$s. We only prove it for $\mathcal{S}_{1}\otimes \mathcal{T}_{1}$, the other one is similar.  Since each $A_{i} \otimes C_{j}\in \mathcal{S}_{1}\otimes \mathcal{T}_{1}$ has nonzero singular values $\{\frac{1}{\sqrt{k_{1}k_{2}}},\frac{1}{\sqrt{k_{1}k_{2}}},\ldots,\frac{1}{\sqrt{k_{1}k_{2}}}\}_{k_{1}k_{2}}$, it is a $k_1k_2$-singular-value-$\frac{1}{\sqrt{k_1k_2}}$ matrix. Next, $\text{Tr}[(A_{i}\otimes C_{j})^{\dag}(A_{i'}\otimes C_{j'})]=\text{Tr}(A_{i}^{\dag}A_{i'})\cdot\text{Tr}(C_{j}^{\dag}C_{j'})=\delta_{i,i'}\delta_{j,j'}$, so they are orthogonal if $(i,j)\neq (i',j')$.
	
	Finally, the mutually unbiased property follows from the fact that $|\text{Tr}[(A_{i}\otimes C_{j})^{\dag}(B_{i'}\otimes D_{j'})]|=|\text{Tr}(A_{i}^{\dag}B_{i'})|\cdot|\text{Tr}(C_{j}^{\dag}D_{j'})|=\frac{1}{\sqrt{pdqd'}}$.

	Let $t=\min\{t_{1},t_{2}\}$. Assume that $\mathcal{S}_{1},\mathcal{S}_{2},\ldots,\mathcal{S}_{t}$ are $t$ MUSEB$k_{1}$s in $\mathcal{M}_{d\times d'}$, and $\mathcal{T}_{1},\mathcal{T}_{2},\ldots,\mathcal{T}_{t}$ are $t$ MUSEB$k_{2}$s in $\mathcal{M}_{p\times q}$.
	It is easy to see that $\{\mathcal{S}_{1}\otimes \mathcal{T}_{1}, \mathcal{S}_{2}\otimes \mathcal{T}_{2},\ldots \mathcal{S}_{t}\otimes \mathcal{T}_{t}\}$ is a set of $t$ MUSEB$k_{1}k_{2}$s in $\mathcal{M}_{pd\times qd'}$ from the above analysis.
\end{proof}

Theorem~\ref{mum} gives a useful generic  construction of MUSEB$k$s (MUMEBs).  We illustrate its importance in several cases.

	When $k_1=d\leq d'$ and $k_2=p\leq q$, we obtain the case for MUMEBs,
	\begin{equation}\label{meb}
	M_{pd}(pd,qd')\geq \min\{M_{d}(d,d'),M_{p}(p,q)\}.
	\end{equation}

\begin{example}\label{rem1}	There are five MUMEBs in $\C^{2}\otimes \C^{4}$ and three MUMEBs in $\C^{2}\otimes \C^{6}$ \cite{Tao}, then we can construct at least three MUMEBs in $\C^{4}\otimes \C^{24}$, that is, $M_{4}(4,24)\geq \min\{M_{2}(2,4),M_{2}(2,6)\}\geq 3$.
\end{example}

%
%

Observing that Theorem~\ref{mum} does not require $d\leq d'$ and $p\leq q$, and the fact that $M_{k}(d,d')=M_{k}(d',d)$, we can get the following different example.

\begin{example}
From five MUMEBs in $\C^{2}\otimes \C^{4}$ and three MUMEBs in $\C^{2}\otimes \C^{6}$, we can also construct three MUSEB$4$s in $\C^{8}\otimes \C^{12}$, that is, $M_{4}(8,12)\geq \min\{M_{2}(4,2),M_{2}(2,6)\}\geq 3$.
\end{example}

Now we consider Theorem~\ref{mum} when some parameters of $d,d',p,q$ equal one. Observing that $\mathcal{M}_{1\times d'}$  is exactly the same  space as $\C^{d'}$,  an SEB$1$ in $\mathcal{M}_{1\times d'}$ is an orthonormal basis in $\C^{d'}$,  and MUSEB$1$s in $\mathcal{M}_{1\times d'}$ are indeed MUBs in $\C^{d'}$. Namely,  $M_{1}(1,d')=N(d')$. Hence,  Eq~(\ref{nd}) is a special case of  Theorem~\ref{mum}  when $k_1=k_2=d=p=1$. When $p=1$ (or $q=1$), we have the following corollary.

\begin{cor}
\label{seb}
	If there are $t_{1}$ MUSEBks in $\C^{d}\otimes \C^{d'}$, $t_{2}$ MUBs in $\C^{q}$, then there are at least $\min\{t_{1},t_{2}\}$ MUSEBks in $\C^{d}\otimes \C^{qd'}$. Namely,
	\begin{equation}\label{k1k}
	M_{k}(d,qd')\geq \min\{N(q),M_{k}(d,d')\}.
	\end{equation}
	If $k=d\leq d'$, then
	\begin{equation}\label{mdq}
	M_{d}(d,qd')\geq \min\{N(q),M_{d}(d,d')\},
	\end{equation}
and
	\begin{equation}\label{mpd}
	M_{d}(pd,d')\geq \min\{N(p),M_{d}(d,d')\}.
	\end{equation}
\end{cor}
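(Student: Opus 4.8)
The plan is to derive everything as a direct specialization of Theorem~\ref{mum}, together with the identification $M_1(1,q)=N(q)$ noted in the text just above the statement. First I would set $p=1$, $q=q$, $k_2=1$ in Theorem~\ref{mum}. On the second factor this means we start from $t_2$ MUSEB$1$s in $\C^{1}\otimes\C^{q}$, i.e.\ $t_2$ MUBs in $\C^{q}$, and the maximum such $t_2$ is $N(q)$ by definition. On the first factor we keep $t_1$ MUSEB$k_1$s in $\C^{d}\otimes\C^{d'}$, which exist for $t_1$ up to $M_{k_1}(d,d')$; I will write $k$ for $k_1$ in the conclusion. Since $k_1 k_2 = k\cdot 1 = k$, $pd = d$, and $qd' = qd'$, Theorem~\ref{mum} immediately yields $\min\{t_1,t_2\}$ MUSEB$k$s in $\C^{d}\otimes\C^{qd'}$, hence the bound $M_k(d,qd')\geq\min\{N(q),M_k(d,d')\}$, which is Eq.~(\ref{k1k}).

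For Eq.~(\ref{mdq}) I would simply note that this is the case $k=d$ with $d\leq d'$ of Eq.~(\ref{k1k}): when $k=d\leq d'$ a $k$-singular-value-$\tfrac{1}{\sqrt{k}}$ matrix of size $d\times d'$ is exactly a (scaled) matrix with orthonormal rows, i.e.\ the states are maximally entangled, so MUSEB$d$s are MUMEBs and the same inequality reads $M_d(d,qd')\geq\min\{N(q),M_d(d,d')\}$. For Eq.~(\ref{mpd}) I would instead set $q=1$ (equivalently $d=1$ on the second factor) and $k_2=1$ in Theorem~\ref{mum}, giving $t_2$ MUBs in $\C^{p}$ — at most $N(p)$ of them — together with $t_1\le M_d(d,d')$ MUMEBs in $\C^{d}\otimes\C^{d'}$; here $k_1k_2=d$, $pd=pd$, $qd'=d'$, so we obtain $\min\{N(p),M_d(d,d')\}$ MUMEBs in $\C^{pd}\otimes\C^{d'}$, which is Eq.~(\ref{mpd}). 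Alternatively one can get Eq.~(\ref{mpd}) from Eq.~(\ref{mdq}) by swapping the two subsystems and invoking $M_k(d,d')=M_k(d',d)$, since $M_d(pd,d')=M_d(d',pd)\geq\min\{N(p),M_d(d',d)\}=\min\{N(p),M_d(d,d')\}$.

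There is essentially no obstacle here: the corollary is a bookkeeping exercise in substituting small parameters into Theorem~\ref{mum}. The only points that need a sentence of care are (i) checking that an SEB$1$ in $\mathcal{M}_{1\times q}$ really is just an orthonormal basis of $\C^{q}$ and that mutual unbiasedness in Definition~\ref{defmu} reduces to the usual MUB condition when $d=d'=k=1$ (both immediate from the definitions, since $|\mathrm{Tr}(a^\dagger b)|$ for $1\times q$ row vectors is just $|\braket{a}{b}|$), and (ii) making sure the Schmidt-number parameter is not accidentally forced up or down — here $k_1k_2=k\cdot 1=k$, so the Schmidt number of the tensored states is unchanged, which is exactly what is needed for the statement to be about MUSEB$k$s rather than MUSEB$k'$s for some other $k'$.
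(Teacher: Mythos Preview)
Your proposal is correct and is exactly the approach the paper takes: the corollary is obtained from Theorem~\ref{mum} by setting $p=1$ (for Eqs.~(\ref{k1k}) and (\ref{mdq})) or $q=1$ (for Eq.~(\ref{mpd})), together with the identification $M_1(1,q)=N(q)$. One small terminological slip: in your derivation of Eq.~(\ref{mpd}) you call the resulting bases in $\C^{pd}\otimes\C^{d'}$ ``MUMEBs'', but for $p\geq 2$ these are MUSEB$d$s, not maximally entangled bases (since $d<pd$); this does not affect the validity of the inequality.
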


\begin{remark}
	As $N(q)\geq 3$ for general $q\geq 2$ \cite{Andreas}, by  Eq.~(\ref{mdq}), \emph{if $M_{d}(d,d')\geq 2$, then $M_{d}(d,qd')\geq 2$ for any $q\geq 2$.}
	This improves the result in \cite{Zhang}, which only showed that if $M_{d}(d,d')\geq 2$, then $M_{d}(d,2^{l}d')\geq 2$.
\end{remark}


	From Eq.~(\ref{mpd}), we give a method to construct MUSEB$k$s  from MUBs and MUMEBs. In fact, we can construct MUSEB$d$s in $\C^{pd}\otimes \C^{d'}$ from MUBs in $\C^{p}$ and MUMEBs in $\C^{d}\otimes \C^{d'}$ for any $p\geq 2$.

\begin{example}\label{33}
	We can construct three MUSEB$3$s in $\C^{6}\otimes \C^{6}$ from three MUMEBs in $\C^{3}\otimes \C^{3}$ and three MUBs in $\C^{2}$. Namely, $M_{3}(6,6)\geq 3$.   For three MUMEBs in $\C^{3}\otimes \C^{3}$, we construct the following three bases in $\mathcal{M}_{3\times 3}$ with $w=e^{\frac{2\pi i}{3}}$ and $i=\sqrt{-1}$.
\begin{widetext}
	\begin{align*}
	\mathcal{S}_{1}=\{&\frac{1}{3\sqrt{3}}\left(
	\begin{matrix}
	w+2  &w+2   & w+2\\
	2w^{2}+1 &w+2      &w^{2}+2w\\
	w^{2}+2w  &w+2      &2w^{2}+1
	\end{matrix}\right), \ \frac{1}{3\sqrt{3}}\left(
	\begin{matrix}
	w+2  &w^{2}+2w  &2w^{2}+1\\
	2w^{2}+1 &w^{2}+2w &w+2\\
	w^{2}+2w  &w^{2}+2w &w^{2}+2w
	\end{matrix}
	\right),
	\frac{1}{3\sqrt{3}}\left(
	\begin{matrix}
	w+2       &2w^{2}+1   &w^{2}+2w\\
	2w^{2}+1  &2w^{2}+1   &2w^{2}+1 \\
	w^{2}+2w  &2w^{2}+1   &w+2
	\end{matrix}
	\right),\\
	&\frac{1}{3\sqrt{3}}\left(
	\begin{matrix}
	w+2     & w+2        &w+2\\
	w+2     & w^{2}+2w   &2w^{2}+1\\
	w+2     & 2w^{2}+1   &w^{2}+2w
	\end{matrix}\right),
	\frac{1}{3\sqrt{3}}\left(
	\begin{matrix}
	w+2   &w^{2}+2w  &2w^{2}+1\\
	w+2   &2w^{2}+1  &w^{2}+2w\\
	w+2   &w+2        &w+2
	\end{matrix}
	\right), \ \frac{1}{3\sqrt{3}}\left(
	\begin{matrix}
	w+2  &2w^{2}+1   &w^{2}+2w\\
	w+2  &w+2        &w+2 \\
	w+2  &w^{2}+2w   &2w^{2}+1
	\end{matrix}
	\right),\\
	&\frac{1}{3\sqrt{3}}\left(
	\begin{matrix}
	w+2        &w+2        &w+2\\
	w^{2}+2w   &2w^{2}+1   &w+2\\
	2w^{2}+1   &w^{2}+2w   &w+2
	\end{matrix}\right), \ \frac{1}{3\sqrt{3}}\left(
	\begin{matrix}
	w+2        &w^{2}+2w    &2w^{2}+1 \\
	w^{2}+2w   &w+2         &2w^{2}+1\\
	2w^{2}+1   &2w^{2}+1   &2w^{2}+1
	\end{matrix}
	\right),
	\frac{1}{3\sqrt{3}}\left(
	\begin{matrix}
	w+2       &2w^{2}+1  &w^{2}+2w\\
	w^{2}+2w  &w^{2}+2w  &w^{2}+2w\\
	2w^{2}+1  &w+2       &w^{2}+2w
	\end{matrix}
	\right)\},\\
	\mathcal{S}_{2}=\{&\frac{1}{3\sqrt{3}}\left(
	\begin{matrix}
	0  & 3w^{2} & 0\\
	3w & 0      & 0\\
	0  & 0      & 3
	\end{matrix}\right), \ \frac{1}{3\sqrt{3}}\left(
	\begin{matrix}
	0  & 3 & 0\\
	3w & 0 & 0\\
	0  & 0 & 3w^{2}
	\end{matrix}
	\right), \ \frac{1}{3\sqrt{3}}\left(
	\begin{matrix}
	0  & 3w  & 0\\
	3w & 0   & 0\\
	0  & 0   & 3w
	\end{matrix}
	\right),
	\frac{1}{3\sqrt{3}}\left(
	\begin{matrix}
	3w^{2}  & 0 & 0\\
	0       & 0 & 3w\\
	0       & 3 & 0
	\end{matrix}\right), \ \frac{1}{3\sqrt{3}}\left(
	\begin{matrix}
	3w^{2}& 0   & 0\\
	0     & 0   & 3\\
	0     &3w   & 0
	\end{matrix}
	\right),\\
	&\frac{1}{3\sqrt{3}}\left(
	\begin{matrix}
	3w^{2}  & 0   & 0\\
	0       & 0   & 3w^{2}\\
	0       & 3w^{2}  & 0
	\end{matrix}
	\right),
	\frac{1}{3\sqrt{3}}\left(
	\begin{matrix}
	0       & 0    & 3w^{2}\\
	0       & 3w   & 0\\
	3       & 0 & 0
	\end{matrix}\right), \ \frac{1}{3\sqrt{3}}\left(
	\begin{matrix}
	0     &0    &3w\\
	0     &3w^{2}   & 0\\
	3     &0   & 0
	\end{matrix}
	\right), \ \frac{1}{3\sqrt{3}}\left(
	\begin{matrix}
	0       & 0   &3\\
	0       &3   &0\\
	3       &0  & 0
	\end{matrix}
	\right)\},\\
	\mathcal{S}_{3}=\{&\frac{1}{\sqrt{3}}\left(
	\begin{matrix}
	1  & 0  & 0\\
	0  & 1  & 0\\
	0  & 0  & 1
	\end{matrix}\right), \ \frac{1}{\sqrt{3}}\left(
	\begin{matrix}
	1  & 0 & 0\\
	0  & w & 0\\
	0  & 0 & w^{2}
	\end{matrix}
	\right), \ \frac{1}{\sqrt{3}}\left(
	\begin{matrix}
	1  & 0      & 0\\
	0  & w^{2}  & 0\\
	0  & 0      & w
	\end{matrix}
	\right),
	\frac{1}{\sqrt{3}}\left(
	\begin{matrix}
	0   & 0 & 1\\
	1   & 0 & 0\\
	0   & 1 & 0
	\end{matrix}\right), \ \frac{1}{\sqrt{3}}\left(
	\begin{matrix}
	0     & 0   & w^{2}\\
	1     & 0   & 0\\
	0     & w   & 0
	\end{matrix}
	\right),
	\frac{1}{\sqrt{3}}\left(
	\begin{matrix}
	0       & 0      & w\\
	1       & 0      &0\\
	0       & w^{2}  & 0
	\end{matrix}
	\right),\\
	&\frac{1}{\sqrt{3}}\left(
	\begin{matrix}
	0       & 1   & 0\\
	0       & 0   & 1\\
	1       & 0   & 0
	\end{matrix}\right), \ \frac{1}{\sqrt{3}}\left(
	\begin{matrix}
	0     &w    &0\\
	0     &0   & w^{2}\\
	1     &0   & 0
	\end{matrix}
	\right), \ \frac{1}{\sqrt{3}}\left(
	\begin{matrix}
	0       &w^{2}   &0\\
	0       &0       &w\\
	1       &0      & 0
	\end{matrix}
	\right)\}.
	\end{align*}
For three MUBs in $\C^{2}$, choose
	$$\mathcal{T}_{1}=\{(0,1),(1,0)\}, \ \mathcal{T}_{2}=\{\frac{1}{\sqrt{2}}(1,1),\frac{1}{\sqrt{2}}(1,-1)\}, \
	\mathcal{T}_{3}=\{\frac{1}{\sqrt{2}}(1,i),\frac{1}{\sqrt{2}}(1,-i)\}.$$
\end{widetext}
Then $\{\mathcal{S}_{1}\otimes \mathcal{T}_{1}, \mathcal{S}_{2}\otimes \mathcal{T}_{2},\mathcal{S}_{3}\otimes \mathcal{T}_{3}\}$ is a set of three MUMEBs in $\C^{3}\otimes \C^{6}$ by Eq.~(\ref{mdq}), and  $\{\mathcal{T}_{1}^{\mathrm{T}}\otimes \mathcal{S}_{1}\otimes \mathcal{T}_{1}, \mathcal{T}_{2}^{\mathrm{T}}\otimes\mathcal{S}_{2}\otimes \mathcal{T}_{2},\mathcal{T}_{3}^{\mathrm{T}}\otimes\mathcal{S}_{3}\otimes \mathcal{T}_{3}\}$ is a set of three MUSEB$3$s in
$\C^{6}\otimes \C^{6}$ by Eq.~(\ref{mpd}).	
\end{example}

\section{Lower Bounds for MUMEBs}\label{Mbs}


Through out this section, we always assume that $d=2^{t}p_{1}^{a_{1}}p_{2}^{a_{2}}\cdots p_{s}^{a_{s}}$, where $p_{1},p_{2},\ldots p_{s}$ are distinct odd primes such that $3\leq p_{1}^{a_{1}}\leq p_{2}^{a_{2}}\leq \cdots \leq p_{s}^{a_{s}}$. For convenience, we define $M_{1}(1,1):=\infty$.

\begin{theorem}\label{dddlower}
	For any positive integer $d\geq 2$, we have $M_{d}(d,d)\geq 3$. In particular,
	\begin{enumerate}[(i)]
		\item $M_{d}(d,d)\geq \frac{p_{1}^{2a_{1}}-1}{2}$ when $t=0$;
		\item $M_{d}(d,d)\geq 3$ when $t=1$;
		\item $M_{d}(d,d)\geq \min\{2(2^t-1),\frac{p_{1}^{2a_{1}}-1}{2}\}$ when $t\geq 2$.
	\end{enumerate}
\end{theorem}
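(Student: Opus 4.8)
The plan is to obtain Theorem~\ref{dddlower} as a consequence of the recursive construction in Theorem~\ref{mum} together with the known lower bounds for $M_{q}(q,q)$ at prime powers $q$. Specializing Eq.~(\ref{meb}) to $d'=d$ and $q=p$ gives $M_{pd}(pd,pd)\geq\min\{M_{d}(d,d),M_{p}(p,p)\}$, and iterating this over any factorization $d=e_{1}e_{2}\cdots e_{r}$ into positive integers (peeling off one factor $e_{i}$ per step) yields
\[
M_{d}(d,d)\ \geq\ \min_{1\leq i\leq r} M_{e_{i}}(e_{i},e_{i}),
\]
where a trivial factor $e_{i}=1$ contributes $M_{1}(1,1)=\infty$ and may be discarded; this is a short induction on $r$, and at every stage the bases produced are genuine maximally entangled bases because both tensor factors carry full Schmidt rank at each step.

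I would then apply this with the prime-power factorization $d=2^{t}p_{1}^{a_{1}}\cdots p_{s}^{a_{s}}$, so that the building blocks are $2^{t}$ (present exactly when $t\geq 1$) together with the odd prime powers $p_{1}^{a_{1}},\ldots,p_{s}^{a_{s}}$. For each odd prime power $p_{i}^{a_{i}}\geq 3$ one has $M_{p_{i}^{a_{i}}}(p_{i}^{a_{i}},p_{i}^{a_{i}})\geq \frac{p_{i}^{2a_{i}}-1}{2}$ by \cite{Xu1}; since $x\mapsto\frac{x^{2}-1}{2}$ is increasing and $p_{1}^{a_{1}}=\min_{i}p_{i}^{a_{i}}$, the minimum over the odd blocks equals $\frac{p_{1}^{2a_{1}}-1}{2}\geq 4$. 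For the $2$-block I would split on $t$: if $t=0$ there is no such block; if $t=1$ the block is $2$ and $M_{2}(2,2)=2^{2}-1=3$ by \cite{Scott}; if $t\geq 2$ the block is $2^{t}\geq 4$, a prime power $\geq 3$, so $M_{2^{t}}(2^{t},2^{t})\geq 2(2^{t}-1)$ by \cite{Xu}.

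Feeding these sub-bounds into the displayed inequality yields the three claims directly: in case (i) ($t=0$) the right-hand side is $\frac{p_{1}^{2a_{1}}-1}{2}$; in case (ii) ($t=1$) it is $\min\{3,\frac{p_{1}^{2a_{1}}-1}{2}\}=3$, with the degenerate subcase $d=2$ covered by $M_{2}(2,2)=3$ itself; in case (iii) ($t\geq 2$) it is $\min\{2(2^{t}-1),\frac{p_{1}^{2a_{1}}-1}{2}\}$, where the second term is dropped (formally replaced by $M_{1}(1,1)=\infty$) when $s=0$. Finally, every $d\geq 2$ lies in exactly one of these three cases, and since $\frac{p_{1}^{2a_{1}}-1}{2}\geq 4$ and $2(2^{t}-1)\geq 6$, each bound is at least $3$, giving the uniform statement $M_{d}(d,d)\geq 3$.

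I do not expect a genuine obstacle: the theorem is essentially an assembly of Theorem~\ref{mum} with the literature inputs \cite{Scott,Xu,Xu1}. The only points requiring care are (a) checking that the iterated use of Eq.~(\ref{meb}) is legitimate at every stage --- immediate here, since at each step the two factors have Schmidt numbers equal to their respective dimensions, so the tensor-product basis is again a maximally entangled basis --- and (b) the small-case bookkeeping, in particular remembering that the prime power $2$ is \emph{not} covered by the ``$\geq 3$'' hypothesis of \cite{Xu}, so the $t=1$ case must be handled instead through the exact value $M_{2}(2,2)=3$ of \cite{Scott}.
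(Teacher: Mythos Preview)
Your proposal is correct and follows essentially the same route as the paper: iterate Eq.~(\ref{meb}) over the prime-power factorization of $d$ and feed in the known prime-power bounds from \cite{Xu1}, \cite{Xu}, and \cite{Scott}. Your write-up is in fact slightly more careful than the paper's about edge cases (the degenerate subcases $d=2$ and $s=0$, and the fact that the bound from \cite{Xu} requires the prime power to be at least $3$), but the underlying argument is identical.
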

\begin{proof} By Eq.~(\ref{meb}), we have
\begin{equation*}
  M_{d}(d,d)\geq
  \min\{M_{2^t}(2^t,2^t),M_{p_{1}^{a_{1}}}(p_{1}^{a_{1}},p_{1}^{a_{1}}),\ldots,M_{p_{s}^{a_{s}}}(p_{s}^{a_{s}},p_{s}^{a_{s}})\}.
\end{equation*}
Since $M_{p_{i}^{a_{i}}}(p_{i}^{a_{i}},p_{i}^{a_{i}})\geq\frac{p_{i}^{2a_{i}}-1}{2}$  by \cite{Xu1},  $M_{d}(d,d)\geq\min\{M_{2^t}(2^t,2^t), \frac{p_{1}^{2a_{1}}-1}{2}\}$. Then the three cases follow from the fact that $M_1(1,1)=\infty$, $M_{2}(2,2)=3$ and $M_{2^t}(2^t,2^t)\geq 2(2^t-1)$ by \cite{Xu}.
\end{proof}

Theorem~\ref{dddlower} gives a general lower bound $M_{d}(d,d)\geq 3$ for any $d\geq 2$, which is similar to the case of MUBs $N(q)\geq 3$ for any $q\geq 2$.

By Eq.~(\ref{mdq}) and Eq.~(\ref{mpd}), we have the following corollary.
\begin{cor}
\label{ddqdl}  For all integers $d, p, q\geq 2$, we have
	\begin{equation}\label{dkd1}
	M_{d}(d,qd)\geq \min\{N(q),M_{d}(d,d)\}\geq 3,
	\end{equation}
	\begin{equation}
	M_{d}(pd,qd)\geq \min\{N(p),M_{d}(d,qd)\}\geq 3. 
	\end{equation}
\end{cor}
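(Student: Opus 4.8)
The plan is to derive both inequalities directly from the recursive construction of Corollary~\ref{seb}, combined with the base-case lower bound of Theorem~\ref{dddlower} and the elementary fact that $N(m)\geq 3$ whenever $m\geq 2$. The latter follows from Eq.~(\ref{nd}): if $m=p_1^{a_1}\cdots p_s^{a_s}$ with $p_1^{a_1}\leq\cdots\leq p_s^{a_s}$, then $N(m)\geq p_1^{a_1}+1\geq 2+1=3$ since any prime power divisor of $m\geq 2$ is at least $2$.

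For the first inequality, I would invoke Eq.~(\ref{mdq}) in the special case $d'=d$. Since then $k=d\leq d'=d$, the side condition of that bound is satisfied, and it yields $M_{d}(d,qd)\geq\min\{N(q),M_{d}(d,d)\}$. It remains to observe that each term on the right-hand side is at least $3$: $N(q)\geq 3$ because $q\geq 2$, and $M_{d}(d,d)\geq 3$ by Theorem~\ref{dddlower}, which holds for every $d\geq 2$. Hence $\min\{N(q),M_{d}(d,d)\}\geq 3$, which establishes $M_{d}(d,qd)\geq\min\{N(q),M_{d}(d,d)\}\geq 3$.

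For the second inequality, I would apply Eq.~(\ref{mpd}) with $d'$ specialized to $qd$; the side condition $k=d\leq d'=qd$ holds because $q\geq 1$. This gives $M_{d}(pd,qd)\geq\min\{N(p),M_{d}(d,qd)\}$. Finally $N(p)\geq 3$ since $p\geq 2$, and $M_{d}(d,qd)\geq 3$ by the first inequality just proved; therefore the minimum is again at least $3$, completing the argument.

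There is essentially no obstacle here: the statement is a bookkeeping consequence of results already established. The only points requiring a moment's care are checking that the size relations $k\leq d'$ needed to invoke Eqs.~(\ref{mdq}) and (\ref{mpd}) indeed hold in the chosen specializations, and ordering the two applications correctly so that the already-proven estimate $M_{d}(d,qd)\geq 3$ can be fed into the bound for $M_{d}(pd,qd)$.
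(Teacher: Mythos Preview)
Your proposal is correct and matches the paper's own argument: the corollary is stated immediately after Corollary~\ref{seb} and Theorem~\ref{dddlower} with the one-line justification ``By Eq.~(\ref{mdq}) and Eq.~(\ref{mpd}),'' which is exactly the route you take, supplemented by the bound $N(q)\geq 3$ from Eq.~(\ref{nd}) and $M_d(d,d)\geq 3$ from Theorem~\ref{dddlower}. Your additional verification of the side conditions $k\leq d'$ is a welcome bit of care but does not depart from the paper's approach.
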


\begin{remark}
	In \cite{Cheng}, it was shown that $M_{d}(d,qd)\geq \min\{(p_{1}')^{a_{1}'}+1,M_{d}(d,d)\}$ when $d$ is an \emph{odd} number.
	Eq. (\ref{dkd1}) extends this result for any number $d\geq 2$.
\end{remark}

We have the following corollary by applying Eq.~(\ref{meb})  with Theorem~\ref{C23} and Eq.~ (\ref{dkd1}).

\begin{cor}\label{21k}
	There are at least two MUMEBs in $\C^{2d}\otimes \C^{3qd}$ for any positive integers $q$ and $d$. Namely,
	\begin{equation}\label{002}
	M_{2d}(2d,3qd)\geq 2.
	\end{equation}
	There are at least two MUSEB$2$s in $\C^{3}\otimes \C^{2k}$ for any $k\geq 1$. Namely,
	\begin{equation}\label{112}
	M_{2}(3,2k)\geq 2.
	\end{equation}
\end{cor}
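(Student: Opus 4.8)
The plan is to read off both inequalities from the recursive construction of Theorem~\ref{mum}, using as the new ingredient the seed $M_{2}(2,3)\ge 2$ supplied by Theorem~\ref{C23}, together with the already-available bounds $M_{d}(d,qd)\ge 3$ (Eq.~(\ref{dkd1}) and Theorem~\ref{dddlower}) and $N(q)\ge 3$ (Eq.~(\ref{nd})). There is no new conceptual step; only the degenerate values of the parameters need a moment of care.

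For Eq.~(\ref{002}) I would view $\C^{2d}\otimes\C^{3qd}$ as the tensor product of $\C^{2}\otimes\C^{3}$ (maximal Schmidt number $2=\min\{2,3\}$) and $\C^{d}\otimes\C^{qd}$ (maximal Schmidt number $d=\min\{d,qd\}$), and apply the MUMEB form Eq.~(\ref{meb}) to these two factors, which yields
\[
M_{2d}(2d,3qd)\ \ge\ \min\{M_{2}(2,3),\,M_{d}(d,qd)\}.
\]
Here $M_{2}(2,3)\ge 2$ by Theorem~\ref{C23}. For the other term: if $d,q\ge 2$ then $M_{d}(d,qd)\ge 3$ by Eq.~(\ref{dkd1}); if $q=1$ and $d\ge 2$ then $M_{d}(d,qd)=M_{d}(d,d)\ge 3$ by Theorem~\ref{dddlower}; and if $d=1$ then $M_{d}(d,qd)=M_{1}(1,q)=N(q)\ge 2$ for every $q\ge 1$ (with the convention $M_{1}(1,1)=\infty$). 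Hence $M_{d}(d,qd)\ge 2$ in all cases, and the minimum above is at least $2$.

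For Eq.~(\ref{112}): when $k=1$ the statement is $M_{2}(3,2)=M_{2}(2,3)\ge 2$, using Theorem~\ref{C23} and $M_{k}(d,d')=M_{k}(d',d)$. When $k\ge 2$ I would apply Eq.~(\ref{k1k}) of Corollary~\ref{seb} with its parameters specialized to Schmidt number $2$, base space $\C^{3}\otimes\C^{2}$, and multiplicity $k$, which gives
\[
M_{2}(3,2k)\ \ge\ \min\{N(k),\,M_{2}(3,2)\}\ \ge\ \min\{3,2\}\ =\ 2,
\]
since $N(k)\ge 3$ for every $k\ge 2$ by Eq.~(\ref{nd}) and $M_{2}(3,2)=M_{2}(2,3)\ge 2$ by Theorem~\ref{C23}.

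All the mathematical weight rests on Theorem~\ref{mum} and Theorem~\ref{C23}; the only point that needs attention --- the nearest thing here to an obstacle --- is the bookkeeping at the boundary values $d=1$, $q=1$, $k=1$, where the hypothesis ``$d,q\ge 2$'' of Eq.~(\ref{dkd1}) or the estimate $N(\cdot)\ge 3$ does not literally apply. Each such case reduces either to the convention $M_{1}(1,1)=\infty$ or straight back to Theorem~\ref{C23}.
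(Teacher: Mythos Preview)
Your proposal is correct and follows the same route as the paper's own proof, which simply cites Eq.~(\ref{meb}) together with Theorem~\ref{C23} and Eq.~(\ref{dkd1}); your treatment of the boundary cases $d=1$, $q=1$, $k=1$ is more careful than the paper's one-line justification and fills in details the paper leaves implicit.
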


\begin{remark}
	Corollary~\ref{21k} shows that two MUMEBs exist  in $\C^{d}\otimes \C^{d'}$ for infinitely many parameters $d,d'$ satisfying $d\nmid d'$. The result of Eq.~(\ref{112}) is better than that in \cite{Han}, which only showed that $M_{2}(3,4k)\geq 2$  where $k=2^{l}$.
\end{remark}

\begin{example}
	There are two MUMEBs in $\C^{6}\otimes \C^{9}$, $\mathcal{R}_{1}\otimes \mathcal{S}_{1},\mathcal{R}_{2}\otimes \mathcal{S}_{2}$. Namely, $M_{6}(6,9)\geq 2$. Here, $\mathcal{R}_{1}, \mathcal{R}_{2}$ are from Sec.~\ref{sec:first}, and $\mathcal{S}_{1}, \mathcal{S}_{2}$ are constructed in Example~\ref{33}.
%
%
\end{example}

\section{Conclusion}
\label{conpen}

In this paper we  studied constructions of MUSEB$k$s (MUMEBs) and provided several better lower bounds on the maximum size of MUSEB$k$s. See Table~\ref{all} for a conclusion of all known results. By similar arguments as in \cite{Shaari}, one can consider the issue of distinguishability of SEB$k$s selected from a set of MUSEB$k$s and the use of MUSEB$k$s in a QKD setup. We can also consider the mutually unbiased measurements consisting of MUSEB$k$s, which plays a special role in the problem of state determination \cite{Woot}.

For the existence of  MUSEB$k$s, there are still many open questions. For example, which $d$ can achieve the upper bound $M_{d}(d,d)\leq d^{2}-1$ besides $2,3,5,7,11$? Can we improve the lower bound of $M_{d}(d,qd)\geq 3$?  Are there three MUMEBs in $\C^{2}\otimes \C^{3}$,  while there are three MUBs in $\C^{6}$. Are there MUSEB$k$s in $\C^{d}\otimes \C^{d'}$
when $k\nmid d d'$? Are there MUMEB  in $\C^{d}\otimes \C^{d'}$ for any $d\nmid d'$? The minimum unsolved case is $\C^{2}\otimes \C^{5}$.

\begin{acknowledgements}	
FS and XZ were supported by NSFC under Grant No. 11771419,  the Fundamental Research Funds for the Central Universities,	and Anhui Initiative in Quantum Information Technologies under Grant No. AHY150200. LC was supported by the  NNSF of China (Grant No. 11871089), and the Fundamental Research Funds for the Central Universities (Grant Nos. KG12080401 and ZG216S1902).

\end{acknowledgements}


\bibliographystyle{unsrt}

\bibliography{MUSEBk}

\begin{thebibliography}{10}

\bibitem{Schwinger}
Julian Schwinger.
\newblock Unitary operator bases.
\newblock {\em Proceedings of the national academy of sciences of the United
  States Of America}, 46(4):570, 1960.

\bibitem{Bennett}
Charles~H Bennett and Gilles Brassard.
\newblock Proceedings of the ieee international conference on computers,
  systems and signal processing.

\bibitem{Acin}
A~Acin, E~Jan{\'e}, and G~Vidal.
\newblock Optimal estimation of quantum dynamics.
\newblock {\em Physical Review A}, 64(5):050302, 2001.

\bibitem{Chiribella2004}
Giulio Chiribella, GM~D’Ariano, P~Perinotti, and MF~Sacchi.
\newblock Efficient use of quantum resources for the transmission of a
  reference frame.
\newblock {\em Physical review letters}, 93(18):180503, 2004.

\bibitem{Lucamarini}
Marco Lucamarini and Stefano Mancini.
\newblock Secure deterministic communication without entanglement.
\newblock {\em Physical review letters}, 94(14):140501, 2005.

\bibitem{Chiribella2008}
Giulio Chiribella, Giacomo~Mauro D’Ariano, and Paolo Perinotti.
\newblock Optimal cloning of unitary transformation.
\newblock {\em Physical review letters}, 101(18):180504, 2008.

\bibitem{Scott}
Andrew~James Scott.
\newblock Optimizing quantum process tomography with unitary $2$-designs.
\newblock {\em Journal of Physics A: Mathematical and Theoretical},
  41(5):055308, 2008.

\bibitem{Shaari}
Jesni~Shamsul Shaari, Rinie~NM Nasir, and Stefano Mancini.
\newblock Mutually unbiased unitary bases.
\newblock {\em Physical Review A}, 94(5):052328, 2016.

\bibitem{Tao}
Yuan-Hong Tao, Hua Nan, Jun Zhang, and Shao-Ming Fei.
\newblock Mutually unbiased maximally entangled bases in $\mathbb{C}^{d}\otimes
  \mathbb{C}^{kd}$.
\newblock 2015.

\bibitem{Zhang}
Jun Zhang, Yuan-Hong Tao, Hua Nan, and Shao-Ming Fei.
\newblock Construction of mutually unbiased bases in $\mathbb{C}^{d}\otimes
  \mathbb{C}^{2^ld'}$.
\newblock 2015.

\bibitem{Liu}
Junying Liu, Minghui Yang, and Keqin Feng.
\newblock Mutually unbiased maximally entangled bases in $\mathbb{C}^{d}\otimes
  \mathbb{C}^{d}$.
\newblock {\em Quantum Information Processing}, 16(6):159, 2017.

\bibitem{Xu}
Dengming Xu.
\newblock Construction of mutually unbiased maximally entangled bases through
  permutations of hadamard matrices.
\newblock {\em Quantum Information Processing}, 16:1--11, 2017.

\bibitem{Xu1}
Dengming Xu.
\newblock Trace-2 excluded subsets of special linear groups over finite fields
  and mutually unbiased maximally entangled bases.
\newblock {\em Quantum Information Processing}, 18(7):213, 2019.

\bibitem{Cheng}
Xiaoya Cheng and Yun Shang.
\newblock New bounds of mutually unbiased maximally entangled bases in
  $\mathbb{C}^{d}\otimes \mathbb{C}^{kd}$.
\newblock {\em Quantum Information \& Computation}, 18(13-14):1152--1164, 2018.

\bibitem{Han}
Yi-Fan Han, Gui-Jun Zhang, Xin-Lei Yong, Ling-Shan Xu, and Yuan-Hong Tao.
\newblock Mutually unbiased special entangled bases with schmidt number $2$ in
  $\mathbb{C}^{3}\otimes \mathbb{C}^{4k}$.
\newblock {\em Quantum Information Processing}, 17(3):58, 2018.

\bibitem{Shi}
Fei Shi, Xiande Zhang, and Yu~Guo.
\newblock Constructions of unextendible entangled bases.
\newblock {\em Quantum Information Processing}, 18(10):324, 2019.

\bibitem{Guo15}
Yu~Guo, Shuanping Du, Xiulan Li, and Shengjun Wu.
\newblock Entangled bases with fixed schmidt number.
\newblock {\em Journal of Physics A: Mathematical and Theoretical},
  48(24):245301, 2015.

\bibitem{Andreas}
Andreas Klappenecker and Martin R{\"o}tteler.
\newblock Constructions of mutually unbiased bases.
\newblock In {\em International Conference on Finite Fields and Applications},
  pages 137--144. Springer, 2003.

\bibitem{Pawe}
Pawel Wocjan and Thomas Beth.
\newblock New construction of mutually unbiased bases in square dimensions.
\newblock {\em arXiv preprint quant-ph/0407081}, 2004.

\bibitem{wpz11}
M~Wiesniak, T~Paterek, and A~Zeilinger.
\newblock Entanglement in mutually unbiased bases.
\newblock {\em New Journal of Physics}, 13(5):053047, 2011.

\bibitem{Boykin05}
P.~O. {Boykin}, M.~{Sitharam}, P.~H. {Tiep}, and P.~{Wocjan}.
\newblock {Mutually Unbiased Bases and Orthogonal Decompositions of Lie
  Algebras}, June 2005.

\bibitem{bw08}
Stephen Brierley and Stefan Weigert.
\newblock Maximal sets of mutually unbiased quantum states in dimension 6.
\newblock {\em Phys. Rev. A}, 78:042312, Oct 2008.

\bibitem{bw09}
Stephen Brierley and Stefan Weigert.
\newblock Constructing mutually unbiased bases in dimension six.
\newblock {\em Phys. Rev. A}, 79:052316, May 2009.

\bibitem{jmm09}
Philippe Jaming, M\'at\'e Matolcsi, P\'eter M\'ora, Ferenc Sz\"oll\"osi, and
  Mih\'aly Weiner.
\newblock A generalized pauli problem and an infinite family of mub-triplets in
  dimension 6.
\newblock {\em Journal of Physics A: Mathematical and Theoretical},
  42(24):245305, 2009.

\bibitem{mub09}
Stephen Brierley.
\newblock {\em Mutually Unbiased Bases in Low Dimensions}.
\newblock PhD thesis, University of York, Department of Mathematics. 2009.

\bibitem{bw10}
Stephen Brierley and Stefan Weigert.
\newblock {Mutually Unbiased Bases and Semi-definite Programming}.
\newblock {\em Journal of Physics: Conference Series}, 254:012008, June 2010.

\bibitem{deb10}
T.~{Durt}, B.-G. {Englert}, I.~{Bengtsson}, and K.~{Zyczkowski}.
\newblock {On mutually unbiased bases}.
\newblock {\em Int. J. Quantum Information}, 8(4):535--640, 2010.

\bibitem{mw12ijqi}
Daniel McNulty and Stefan Weigert.
\newblock On the impossibility to extend triples of mutually unbiased product
  bases in dimension six.
\newblock {\em International Journal of Quantum Information}, 10(05):1250056,
  2012.

\bibitem{mw12jpa135307}
Daniel McNulty and Stefan Weigert.
\newblock All mutually unbiased product bases in dimension 6.
\newblock {\em Journal of Physics A: Mathematical and Theoretical},
  45(13):135307, 2012.

\bibitem{rle11}
Philippe Raynal, Xin L\"u, and Berthold-Georg Englert.
\newblock Mutually unbiased bases in six dimensions: The four most distant
  bases.
\newblock {\em Phys. Rev. A}, 83:062303, Jun 2011.

\bibitem{Sz12}
Ferenc Sz\"oll\"osi.
\newblock Complex hadamard matrices of order 6: a four-parameter family.
\newblock {\em Journal of the London Mathematical Society}, 85(3):616--632,
  2012.

\bibitem{Goyeneche13}
D~Goyeneche.
\newblock Mutually unbiased triplets from non-affine families of complex
  hadamard matrices in dimension 6.
\newblock {\em Journal of Physics A: Mathematical and Theoretical},
  46(10):105301, 2013.

\bibitem{mw12jpa102001}
Daniel McNulty and Stefan Weigert.
\newblock The limited role of mutually unbiased product bases in dimension 6.
\newblock {\em Journal of Physics A: Mathematical and Theoretical},
  45(10):102001, 2012.

\bibitem{mb15}
Andrew~S. Maxwell and Stephen Brierley.
\newblock {On properties of Karlsson Hadamards and sets of mutually unbiased
  bases in dimension six}.
\newblock {\em Linear Algebra and its Applications}, 466(0):296 -- 306, 2015.

\bibitem{mpw16}
Daniel McNulty, Bogdan Pammer, and Stefan Weigert.
\newblock Mutually unbiased product bases for multiple qudits.
\newblock {\em Journal of Mathematical Physics}, 57(3), 2016.

\bibitem{cy17}
Lin Chen and Li~Yu.
\newblock Product states and schmidt rank of mutually unbiased bases in
  dimension six.
\newblock {\em Journal of Physics A: Mathematical and Theoretical},
  50(47):475304, 2017.

\bibitem{chen2018mutually}
Lin Chen and Li~Yu.
\newblock Mutually unbiased bases in dimension six containing a product-vector
  basis.
\newblock {\em Quantum Information Processing}, 17(8):198, 2018.

\bibitem{liang2019h_2}
Mengfan Liang, Mengyao Hu, Lin Chen, and Xiaoyu Chen.
\newblock The $h_2$-reducible matrix in four six-dimensional mutually unbiased
  bases.
\newblock {\em Quantum Information Processing}, 18(11):352, 2019.

\bibitem{Woot}
William~K Wootters and Brian~David Fields.
\newblock Optimal state-determination by mutually unbiased measurements.
\newblock {\em Annals of Physics}, 191(2):363--381, 1989.

\end{thebibliography}

\end{document}